\documentclass[sigconf]{acmart}
\AtBeginDocument{%
  }

\setcopyright{acmlicensed}
\copyrightyear{2018}
\acmYear{2018}
\acmDOI{XXXXXXX.XXXXXXX}
\acmConference[Conference acronym 'XX]{Make sure to enter the correct
  conference title from your rights confirmation email}{June 03--05,
  2018}{Woodstock, NY}
\acmISBN{978-1-4503-XXXX-X/2018/06}

\usepackage{tikz}
\usepackage{amsmath}

\usepackage{filecontents}

\usepackage{lipsum}
\usepackage{tikz}
\usepackage{amsmath}
\usepackage{stfloats}
\usepackage{url}
\usepackage{verbatim}
\usepackage{graphicx}
\usepackage{dsfont}
\usepackage{amsfonts}
\usepackage{array}
\usepackage{hyperref}
\usepackage[caption=false,font=normalsize,labelfont=sf,textfont=sf]{subfig}
\usepackage{algorithm2e}
\usepackage{footmisc}
\RestyleAlgo{ruled}
\SetKwComment{Comment}{/* }{ */}
\usepackage{booktabs}
\usepackage[table]{xcolor}
\usepackage{filecontents}
\usepackage{multirow}

\usepackage{amssymb}
\usepackage{amsthm}
\usepackage{mathtools}
\usepackage{bbm}
\usepackage{bm}
\usepackage{dsfont}
\usepackage{enumitem}

\newcommand{\mcl}[1]{\mathcal{#1}}
\newcommand{\mbf}[1]{\mathbf{#1}}
\newcommand{\mbb}[1]{\mathbb{#1}}
\newcommand{\Dsyn}[0]{D_{\text{syn}}}
\newcommand{\Dreal}[0]{D_{\text{real}}}
\newcommand{\Daux}[0]{D_{\text{aux}}}

\newcommand{\Dinv}[0]{D_{\text{inv}}}
\newcommand{\adv}[0]{A}
\newcommand{\victim}[0]{V}
\newcommand{\R}[0]{\mbb R}

\newcommand{\header}[1]{\vskip 4pt \noindent \textbf{#1}}

\newcommand{\Ltraj}[0]{L_{\text{traj}}}

\newtheorem{thm}{Theorem}[section]
\newtheorem{cor}{Corollary}[section]
\newtheorem{lem}{Lemma}[section]

\numberwithin{equation}{section}

\begin{document}

\title{Turning Black Box into White Box: Dataset Distillation Leaks}


\author{Huajie Chen$^1$, Tianqing Zhu$^{\clubsuit, 1}$, Yuchen Zhong$^1$, Yang Zhang$^2$, Shang Wang$^3$, Feng He$^3$, Lefeng Zhang$^1$, Jialiang Shen$^4$, Minghao Wang$^1$, Wanlei Zhou$^1$}

\renewcommand{\shortauthors}{Chen et al.}


\begin{abstract}
Dataset distillation compresses a large real dataset into a small synthetic one, enabling models trained on the synthetic data to achieve performance comparable to those trained on the real data. Although synthetic datasets are assumed to be privacy-preserving, we show that existing distillation methods can cause severe privacy leakage because synthetic datasets implicitly encode the weight trajectories of the distilled model, they become over-informative and exploitable by adversaries.
To expose this risk, we introduce the Information Revelation Attack (IRA) against state-of-the-art distillation techniques. IRA has three stages: architecture inference, membership inference, and model inversion, each exploiting hidden information in the synthetic data.
In the architecture inference stage, the adversary trains an attack model that predicts the distillation algorithm and the model architecture using the loss trajectory—obtained by recording the losses of a model trained on the synthetic dataset. With this predicted architecture, the adversary can train a local model that mirrors the victim model’s structure and performance, \textit{effectively converting a black-box scenario into a white-box one}—surpassing the capabilities of shadow models used in prior attacks.
Next, in the membership inference stage, the adversary leverages full access to this local model to train an attack model that uses its hidden-layer and final-layer outputs to detect whether a sample was in the real dataset.
Finally, in the model inversion stage, the adversary attempts to reconstruct real samples. We propose an enhanced dual-network diffusion framework that allows imposing constraints on the generator, as well as a trajectory loss that guides the generator toward the real data distribution by exploiting deeper information embedded in the synthetic dataset.
Experiments show that IRA accurately predicts both the distillation algorithm and model architecture, and can successfully infer membership and recover sensitive samples from the real dataset.
Our code will be released after acceptance.
\end{abstract}

\begin{CCSXML}
<ccs2012>
<concept>
<concept_id>10002978</concept_id>
<concept_desc>Security and privacy</concept_desc>
<concept_significance>500</concept_significance>
</concept>
<concept>
<concept_id>10002978.10002991.10002996</concept_id>
<concept_desc>Security and privacy~Digital rights management</concept_desc>
<concept_significance>500</concept_significance>
</concept>
</ccs2012>
\end{CCSXML}

\ccsdesc[500]{Security and privacy}

\keywords{Dataset Distillation, Privacy Attack, Deep Learning}


\maketitle

\section{Introduction}

\begin{figure}[t!]
    \centering
    \includegraphics[width=\linewidth]{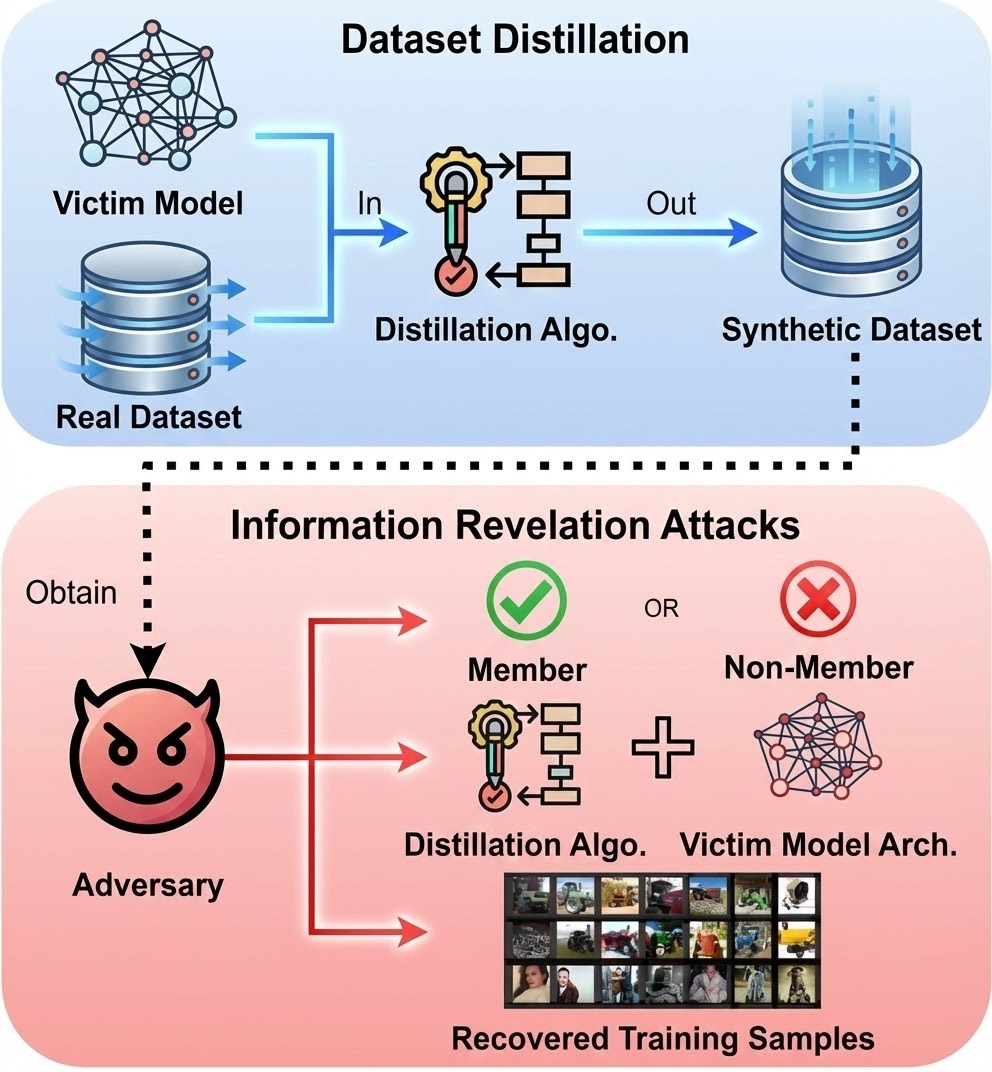}
    \caption{\textbf{The overview of IRA.} The victim generates a publicly available synthetic dataset using the distillation algorithm with the victim model and the real dataset.
    However, with the synthetic dataset, the adversary can reveal the sensitive information by launching IRA.}
    \label{fig:ddl_idea}
\end{figure}

Dataset distillation \cite{Wang2018DatasetDistillation, Zhao2021DatasetCondesation} is currently a trending technique that aims to improve the efficiency of training deep learning models using highly informative synthetic datasets.
Dataset distillation aims to distill a large real dataset down to a synthetic dataset that contains fewer samples but with a similar performance comparable to the model trained on the original dataset.
For example, Wang et al. \cite{Wang2018DatasetDistillation} synthesize a dataset that contains only $100$ synthetic images based on the CIFAR10 dataset that includes $50,000$ samples.
This dataset allows the model trained on it to perform similarly as those trained on the real dataset.
In addition, the synthetic samples are distinct from the real samples because the synthetic samples contain features for the condensed information, and therefore they tend to appear like noisy images.
Thus, it is believed that dataset distillation can be a certifiable privacy-preserving solution and will play an essential role in improving the training efficiency of deep learning models~\cite{hu2023sokPPDS,giuffre2023harnessing,lei2023comprehensive,yu2023dataset}.

Notably, state-of-the-art dataset distillation algorithms produce high-quality synthetic datasets that contain weight-trajectory information originating from the training process using the real dataset \cite{Du2023FlatTrajectoryDistillation, Guo2023TowardsLosslessDatasetDistillation, lee2024selmatch}.
Such synthetic datasets enable the trained models to achieve the same performance, or even surpass the models trained on the original dataset.
For instance, Guo et al. \cite{Guo2023TowardsLosslessDatasetDistillation} achieve lossless dataset distillation, where the CIFAR-10 dataset is distilled to a synthetic dataset including $10,000$ samples ($1,000$ images per class).
Moreover, the synthetic dataset even allows the model to reach $0.855$ test accuracy, surpassing the model trained on the real CIFAR10 dataset with $0.848$ test accuracy.

Despite the promising results and advantages offered by dataset distillation, dataset distillation is currently facing serious threats from privacy attacks.
In general settings of privacy attacks in deep learning, a victim model is usually a \textit{black box} for an adversary.
This indicates that the adversary can only query the victim model to get its outputs for further uses.
However, \textit{if the adversary can correctly infer the model architecture of the victim model, the adversary can use the high-quality synthetic dataset to train a local model that has the same architecture and similar weights as the victim model.}
\textbf{This allows the adversary to turn the black box into a white box.}
We believe that this is a severe and unexplored problem that will inevitably lead to privacy leaks and will make the synthetic dataset suffer from security and privacy attacks such as membership inference attacks \cite{shokri2017membership} and model inversion attacks \cite{fredrikson2015model}.

\header{Our Work.}
To reveal this problem, we propose \textbf{the first Information Revelation Attack (IRA) on dataset distillation methods}, as shown in \autoref{fig:ddl_idea}.
IRA consists of three stages: architecture inference, membership inference and model inversion.
In the architecture inference stage, the adversary aims to \textit{infer the distillation algorithm and the model architecture that are used in the distillation process}.
The adversary first generates synthetic datasets using different combinations of model architectures and distillation algorithms.
Then, each of these synthetic datasets is used to generate loss trajectories by training models with the same architecture.
The loss trajectories are eventually used to train an attack model that takes a loss trajectory to predict the distillation algorithm and the model architecture.
With the synthetic dataset and the predicted model architecture, the adversary can now train a local model for the next two stages.

In the membership inference stage, the goal of the adversary is \textit{to determine whether a sample belongs to the real dataset or not}.
The adversary first collects an auxiliary dataset that shares the same distribution as the real dataset.
An attack model is then trained with the outputs from the hidden and the final layers of the local model given the samples in the auxiliary dataset.
Next, given an arbitrary unseen sample, the attack model can exploit the information left in the synthetic dataset to the best of its ability to determine whether the sample is a member or a non-member.


In the model inversion stage, the adversary aims \textit{to recover data samples that can be considered as members in the real dataset}.
The adversary utilizes a dual diffusion model framework proposed by us to learn the distribution of the real dataset.
The framework enables the adversary to add constraints that force the outputs of the diffusion model to align more closely to the samples in the real dataset.
The models are trained on the auxiliary dataset along with the guidance of the local model.
The classification loss and  trajectory loss proposed in this study are designed to make the models exploit the information contained within the local model and the synthetic dataset.

To achieve the above attack goals, several research questions (RQs) need to be addressed:
\textbf{RQ1:} Why can the model architecture used in the distillation process be identified?
\textbf{RQ2:} How can the adversary collect an effective auxiliary dataset?
\textbf{RQ3:} How to determine whether a member sample is successfully recovered?

Then, we have the following answers:
\textbf{A1:} Inspired by Zhang et al. \cite{zhang2023plot}, we discover that the model architecture can be differentiated by the loss trajectories that are generated during the training process of the models with the synthetic datasets.
Moreover, we conduct a theoretical analysis, where we discuss the relation between the synthetic datasets and the corresponding loss trajectories.
We then theoretically prove the feasibility of the architecture inference attack.
\textbf{A2:} Following the manner of MIAs \cite{shokri2017membership}, the adversary is allowed to access the online datasets in public platforms such as Kaggle and Google Dataset Search.
Assuming to have the knowledge of the task, the adversary is able to collect an auxiliary dataset that is similar to or shares the same task to the original model.
\textbf{A3:} This work does not focus on discussing the metrics of evaluating the performance of any model inversion attack.
Hence, we follow the previous studies \cite{liu2024unstoppable, Kahla2022LabelOnlyMIV}, where attack accuracy and KNN distance are used as the evaluation metrics.
A recovered sample that can be classified into the target class by the victim model trained on the real dataset is then considered a member sample.
Meanwhile, such a sample with a low KNN distance is considered to be more representative in the target class.

Our contributions are listed as follows.
\begin{itemize}[itemsep=1pt, parsep=1pt]
    \item We reveal that current dataset distillation methods fail to protect neither the privacy of the real dataset nor the model used in the distillation process. 
    \item We propose the first information revelation attack on current dataset distillation methods.
    Experimental results show that IRA can effectively infer model architectures and distillation algorithms used in the distillation process. By training a local model with the correct model architecture using the synthetic dataset, the adversary can turn the black-box victim model into a white box that contributes to membership inference and model inversion attacks.
    \item We conduct a theoretical analysis of the architecture inference attack, where the relation between the synthetic dataset and the loss trajectory is thoroughly discussed. We mathematically prove the feasibility of this attack.
\end{itemize}





\section{Related Works \& Preliminaries}

Primary notations in this paper are listed in \autoref{tab:notations}.

\begin{table}[t!]
\centering
\caption{Primary Notations.}
\small
\begin{tabular}{c|l}
    \toprule
    Symbols & Definitions\\
    \midrule
    $\victim, \adv$ & The victim/adversary\\
    \cellcolor{gray!20}$f, h$ & \cellcolor{gray!20}The victim/local model\\
    $\theta, \tau$ & The weights of the victim/local model\\
    \cellcolor{gray!20}$\Dreal$, $\Dsyn$ & \cellcolor{gray!20}The real/synthetic dataset\\
    $\Daux, \Dinv$ & The auxiliary/inversed dataset\\
    \cellcolor{gray!20}$b, c$ & \cellcolor{gray!20}The membership/target class label\\
    $\mcl F$ & The set of model architectures\\
    \cellcolor{gray!20}$\Gamma$ & \cellcolor{gray!20}The set of distillation algorithms\\
    $x, y, z$ & The sample/label/random noise\\
    \cellcolor{gray!20}$L$ & \cellcolor{gray!20}The loss function\\
    $\alpha_t, \beta_t$ & The pre-defined constants\\
    \cellcolor{gray!20}$\epsilon, \lambda$ & \cellcolor{gray!20}The Gaussian noise/weight parameter\\
    $C, E$ & The number of classes/training epochs\\
    \cellcolor{gray!20}$T, l$ & \cellcolor{gray!20}The number of sampling steps/synthetic sets\\
    $M, N$ & Input dimension/Batch Size\\
    \cellcolor{gray!20}$A_M$ & \cellcolor{gray!20}The membership inference model\\
    $A_A$ & The architecture inference model\\
    \cellcolor{gray!20}$\phi, \psi$ & \cellcolor{gray!20}The dual diffusion models\\
    $T^L, T^W$ & The loss/weight trajectory\\
    \bottomrule
\end{tabular}\label{tab:notations}
\end{table}

\subsection{Dataset Distillation}
\noindent
\textbf{Dataset Distillation.}
The goal of dataset distillation \cite{Wang2018DatasetDistillation} is to create a synthetic dataset $\Dsyn$ by utilizing a large real dataset $\Dreal$, where $|\Dsyn| << |\Dreal|$.
Models trained on $\Dsyn$ will exhibit comparable performance on the test set to those trained on $\Dreal$.
The first work of dataset distillation \cite{Wang2018DatasetDistillation} aims to directly minimize the model loss on both $\Dsyn$ and $\Dreal$, which is defined as 
\begin{equation}\label{eq:dataset_distillation}
\begin{aligned}
    \Dsyn^* = & \ \underset{\Dsyn}{\arg\min} \ L(\Dreal, \theta),\\
    \text{s.t. } \theta^* = & \ \underset{\theta}{\arg\min} \ L(\Dsyn, \theta),
\end{aligned}
\end{equation}
where $\theta$ denotes the parameters of the model; $L(\cdot, \cdot)$ is the loss function.
In each iteration, $\Dsyn$ is updated by optimizing
\begin{equation}
\begin{aligned}
    \Dsyn \gets \Dsyn - \eta \nabla_{\Dsyn} L(\Dreal, \theta),
\end{aligned}
\end{equation}
where $\eta$ denotes the learning rate.

\header{Dataset Condensation.}
Dataset condensation \cite{Zhao2021DatasetCondesation} requires not only the model trained on $\Dsyn$ to have performance comparable to that of the model trained on $\Dreal$, but also the weights to have a similar convergence result compared to those of the model trained on $\Dreal$.
Let $\theta_{\Dsyn}$ and $\theta_{\Dreal}$ respectively denote the weights derived by training on $\Dsyn$ and $\Dreal$, the goal is thus defined as
\begin{equation}
\begin{aligned}
    \Dsyn^* = &\ \underset{\Dsyn}{\arg \min} \ \kappa(\theta_{\Dsyn}, \theta_{\Dreal}), \\
    \text{s.t. } \theta_{\Dsyn} = &\ \underset{\theta}{\arg \min} \ L(\Dsyn, \theta),
\end{aligned}
\end{equation}
where $\kappa(\cdot, \cdot)$ is a distance function defined as
\begin{equation}
\begin{aligned}
    \kappa(\theta_{\Dsyn}, \theta_{\Dreal}) = \sum_{i=1}^H \iota \bigg( \nabla_{\theta_{\Dsyn}^i} L(\Dsyn, \theta_{\Dsyn}),\\
    \nabla_{\theta_{\Dreal}^i} L(\Dreal, \theta_{\Dreal}) \bigg);
\end{aligned}
\end{equation}
$H$ denotes the number of layers, and $\iota(\cdot, \cdot)$ is a distance metric such as Euclidean distance.
$\Dsyn$ is updated by optimizing
\begin{equation}
\begin{aligned}
    \Dsyn \gets \Dsyn - \eta \nabla_{\Dsyn} \kappa(\theta_{\Dsyn}, \theta_{\Dreal})
\end{aligned}
\end{equation}

\header{Trajectory-Matching Dataset Distillation.}
The current dataset distillation method \cite{Guo2023TowardsLosslessDatasetDistillation, du2024SeqMatch} employs trajectory matching to perform lossless distillation.
The objective of weight-trajectory-matching dataset distillation is to make the gradient-descent trajectory of the model trained on $\Dsyn$ similar to that of the model trained on $\Dreal$, which is defined as
\begin{equation}
\begin{aligned}
    \Dsyn^* = \underset{\Dsyn}{\arg\min} \frac{\|\hat{\theta}_{t+N} - \theta^*_{t+M}\|_2^2}{\|\theta^*_t - \theta^*_{t+M}\|},
\end{aligned}
\end{equation}
where $\theta^*_t$ denotes the weights of the model trained on $\Dreal$ at step $t$; 
$M$ and $N$ are two pre-defined hyperparameters.
$\hat{\theta}_{t}$ is the weights of the model trained on $\Dsyn$ at step $t$, and is obtained by
\begin{equation}
\begin{aligned}
    \hat{\theta}_{t+i+1} = \hat{\theta}_{t+i} - \eta \nabla l(\hat{\theta}_{t+i}, \Dsyn) \text{, where } \hat{\theta}_t := \theta_t^*.
\end{aligned}
\end{equation}

\subsection{Information Revelation Attack}
\header{Architecture Inference Attacks.}
As demonstrated in previous works \cite{zhang2023plot,Gao2024Deeptheft,gao2025try}, the adversary aims to infer the model architecture of the target model $f$ in the architecture inference attack (AIA).
The model architecture and even other hyperparameters used in the training process can be inferred using side information, e.g., figures of the training loss and power side channels.
Assume there is a set of model architecture $\mcl F$ and the size of the set is denoted by $l = |\mcl F|$, the goal of the adversary is to determine which architecture is used by $f$.
Consider the side information derived in this attack as an input $x \in \mbb R^n$, the adversary attempts to find a mapping $A_A: \mbb R^n \mapsto \mbb R^l$, which is the inference model.
Similarly to MIA, training a reliable $A_A$ is crucial in AIA.

\header{Membership Inference Attacks.}
In the membership inference attack (MIA) \cite{hu2022membership}, the victim who owns a dataset $D$ trains a classifier model $f$ on $D$. 
The goal of the adversary is to determine whether a sample $\mbf x$ belongs to $D$ or not, with access to $f(\mbf x; \theta)$. 
The adversary can have either white-box access or black-box access.
Black-box access can be further divided into three subclasses: full confidence scores, top-K confidence scores, and label-only. 
A common approach in MIA \cite{shokri2017membership, Salem2019MLLeaks} is to train a local model $h$ on the same (or similar) distribution as that of the target classifier $f(\mbf x; \theta)$.
Then, the adversary trains an attack model $A_M$ based on $h$, using the outputs from it to improve $A$.
In particular, establishing an effective $A_M$ plays the most important role therein.

\header{Model Inversion Attacks.}
Model Inversion (MIV) refers to partially or entirely the reconstruction of the target model's training dataset \cite{Kahla2022LabelOnlyMIV,liu2024unstoppable}. 
This can be formulated as an optimization problem that aims to find sensitive feature values that maximize the likelihood of the target model. 
Given a training dataset $D_{\text{train}} = \{(x^1, y^1),...,(x^n, y^n)\}$ and a model $f(x; \theta)$ trained on $D_{\text{train}}$, the adversary seeks to construct an inverse dataset $D_{\text{inv}} = \{(\tilde{x}^1, \tilde{y}^1),...,(\tilde{x}^m, \tilde{y}^m)\}$, $m \leq n$, such that $\forall i \in [1, m], \exists j \in [1, n]$, $\|\tilde{x}^i - x^j\|^2 < \rho$, where $\rho \in \R$ is a pre-defined error budget.
And $\forall i \in [1, m]$, $f(\tilde{x}^i) = \tilde{y}^i$.

\subsection{Denoising Diffusion Probabilistic Models.}


Denoising diffusion probabilistic models (DDPMs) \cite{Ho2020DDPM} are a class of generative models that learn to generate data through a process of iterative noise removal, inspired by thermodynamic diffusion.
In the forward process, a clean image $x_0$ is gradually added with random Gaussian noise until the maximum step number $T$ is reached.
This forward process is described as the transition equation
\begin{equation}
\begin{aligned}
    q(x_t|x_0):=\mcl N(x_t; \sqrt{\bar{\alpha}_t}x_0, (1-\bar{\alpha}_t)\mbf I),
\end{aligned}
\end{equation}
where $\alpha_t := 1-\beta_t$;
$\bar{\alpha}_t := \prod^t_{i=1}\alpha_i$;
$\beta_t \in (0,1]$ are pre-defined constant.
The forward process is further simplified as
\begin{equation}
\begin{aligned}
    x_t = \sqrt{\bar{\alpha}_t}x_0 + \sqrt{1-\bar{\alpha}_t}\epsilon,
\end{aligned}\label{eq:xt}
\end{equation}
where $\epsilon \in \mcl N(\mbf 0, \mbf I)$.
Thereby, one can easily get $x_t$ with $x_0$ and $t \in [1, T]$.

In the reverse process, $x_T$ is stepwise de-noised into $x_0$ following the reverser transition equation
\begin{equation}\label{eq:reverse_transition}
\begin{aligned}
    p_\theta(x_{t-1}|x_t) := \mcl N(x_{t-1}; \mu_\theta(x_t, t), \Sigma_\theta(x_t, t)).
\end{aligned}
\end{equation}
The loss function used in training a DDPM is simplified to be
\begin{equation}\label{eq:ddpm_train_loss}
\begin{aligned}
    L_t := \frac{1}{2\beta_t^2} \|\tilde{\mu}_t(x_t, x_0, t) - \mu_\theta(x_t, t)\|^2,
\end{aligned}
\end{equation}
where
\begin{equation}\label{eq:mu_t}
\begin{aligned}
    \tilde{\mu}_t(x_t, x_0, t) = \frac{\sqrt{\bar{\alpha}_{t-1}}\beta_t}{1-\bar{\alpha}_t}x_0 + \frac{\sqrt{\alpha_t}(1-\bar{\alpha}_{t-1})}{1-\bar{\alpha}_t}x_t.
\end{aligned}
\end{equation}

\section{Information Revelation Attack on Dataset Distillation}
\subsection{Threat Model}

\header{The Goal of the Adversary.}
The victim $\victim$ synthesizes a synthetic dataset $\Dsyn$ using an authentic dataset $\Dreal$, a victim model $f:\R^M \mapsto \R^C$ trained on $\Dreal$, and some dataset distillation algorithm $\gamma \in \Gamma$.
Here, $M$, $C$, and $\Gamma$ respectively denote the dimension of the flattened input image, the number of classes, and the set of dataset distillation algorithms.
With access to $\Dsyn$, the adversary $\adv$ aims to reveal sensitive information from $\Dreal$. 
The sensitive information is categorized into three classes:
\textit{architecture information}, \textit{membership}, and \textit{training samples in $\Dreal$}.

To reveal the dataset distillation algorithm $\gamma$ and the model architecture of $f$ used in the distillation process, $\adv$ is required to use AIA along with all possible side information to infer the algorithm of $\gamma$, such as dataset distillation \cite{Wang2018DatasetDistillation} or dataset condensation \cite{Zhao2021DatasetCondesation}, and the architecture of $f$, e.g., AlexNet;
to reveal the membership information, $\adv$ needs to launch MIA so as to determine whether an arbitrary data sample $x \in \Dreal$;
to reveal the training samples is to recover a dataset $\Dinv$ such that $\forall x \in \Dinv, x \in \Dreal$. 
Specifically, to determine whether a given sample $x \in \Dinv$ belongs to $\Dreal$, we follow the previous studies \cite{liu2024unstoppable, Kahla2022LabelOnlyMIV}.
If a recovered sample with a target label is classified by the victim model into the target class, this recovered sample is then considered a member.

\header{The Capability of the Adversary.}
Given the fact that the synthetic dataset $\Dsyn$ is released to the public by $\victim$, $\adv$ has full access to $\Dsyn$.
$\adv$ can access publicly available datasets to collect an auxiliary dataset $\Daux$ that has a distribution similar to that of $\Dreal$.
$\adv$ also has the knowledge about all existing model architectures and training algorithms.
$\adv$ can thus use $\Dsyn$ to train any number of local models $h_i(\cdot; \tau_i)$ with $i$ as the index, and access any weight or output from an arbitrary layer. 
$\adv$ is also capable of documenting intermediates in the training process.

\header{The Challenge of the Adversary.}
$\adv$ does not have any access to $\Dreal$. 
The model architecture, the training algorithm, and the distillation method used by $\victim$ are unknown to $\adv$, despite the fact that $\adv$ can access all publicly available architectures, algorithms and distillation methods.
Firstly, $\adv$ needs to establish a reliable binary classifier $A_M$, i.e., the membership inference attack model, so as to judge whether a sample $x \in \Dreal$;
secondly, $\adv$ must generate sufficient and valid loss trajectories to support the training of the architecture inference attack model $A_M$;
Lastly, $\adv$ has to train a generative network such that given a target class, the network can generate samples that can be classified by the victim model into the target class.
The samples should also capture the perceptual features in the real dataset and look natural.

\header{Attack Scenarios.}
We envision $\victim$ as the data owner who produces the synthetic dataset $\Dsyn$ with the authentic training dataset $\Dreal$ by training a victim model $f$ with a dataset distillation algorithm $\gamma$.
$\victim$ then release $\Dsyn$ on any public online dataset platform for free use, e.g., Kaggle\footnote{https://www.kaggle.com/datasets/amarhaiqal/aspen-hysys-distillation-column-data} and Hugging Face\footnote{https://huggingface.co/datasets/lemon-mint/bge\_distillation\_train\_dataset\_v1}, or on any market for merchant purpose.
$\adv$ can either download $\Dsyn$ for free or purchase it, and then launch the aforementioned attack.

\begin{figure*}[t!]
    \centering
    \includegraphics[width=.9\linewidth]{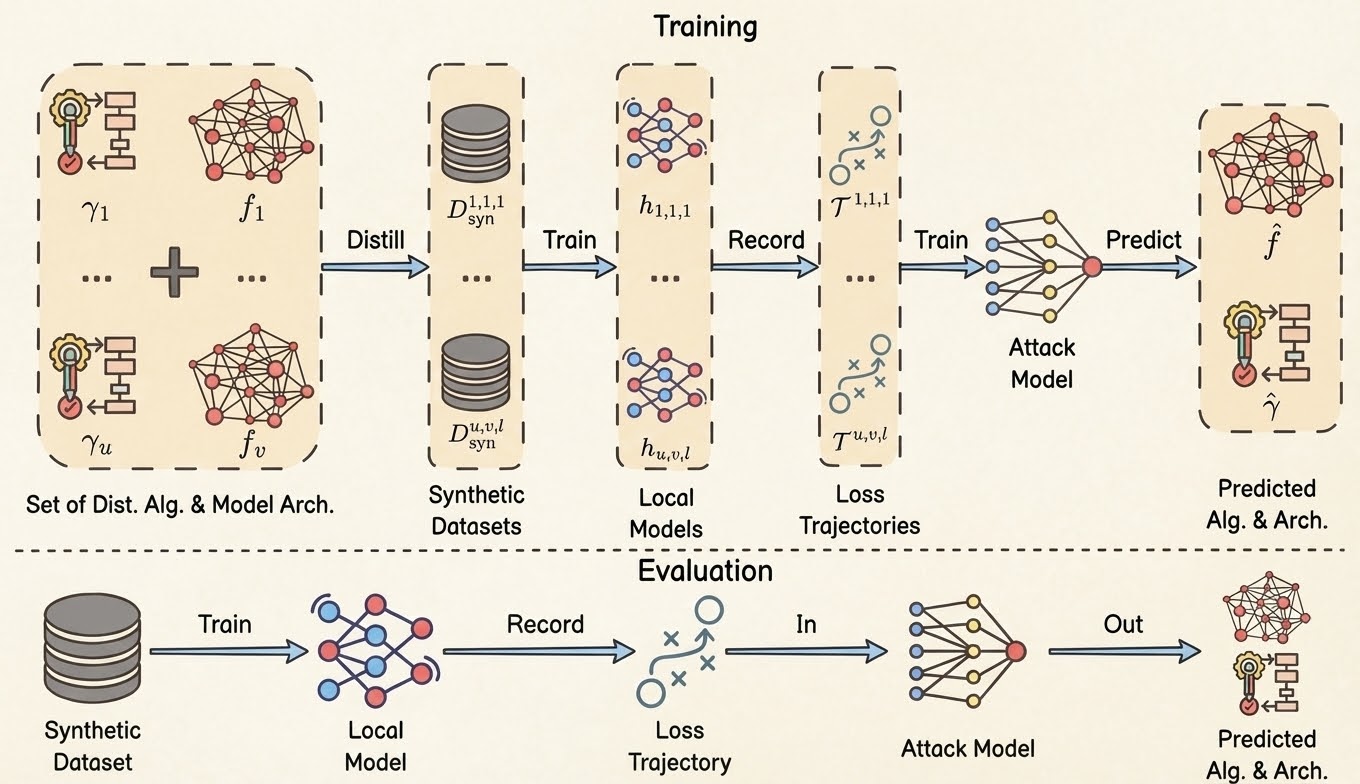}
    \caption{\textbf{Architecture Inference Stage.}
    For the AIA, $\adv$ combines different distillation algorithms $\gamma_i$ with model architectures $f_j$ to synthesize $l$ synthetic datasets.
    Then, these synthetic datasets are respectively used to train $u \times v \times l$ local models to create the set of loss trajectories $\mcl T$.
    $\mcl T$ is then used to train the architecture attack model $A_A$ to predict which algorithm and architecture are used for the dataset distillation.
    }
    \label{fig:aia_workflow}
\end{figure*}

\subsection{Initialization of Information Revelation Attack}
Having the synthetic dataset $\Dsyn$, $\adv$ begins the first step of the attack, which is to train a local model $h:\R^M \mapsto \R^C$.
There are multiple choices of hyper-parameters, model architectures, and training algorithms.
$\adv$ can even train multiple local models to form an ensemble.
Here, we consider the simplest way, which is to train one local model on a mainstream architecture, e.g., ResNet18.
By default, $\adv$ trains $h$ with the cross entropy loss 
\begin{equation}\label{eq:cross_entropy_loss}
\begin{aligned}
    L_{\text{CE}}(y, h(x)) = -\frac{1}{N} \sum_{i=1}^N \sum_{j=1}^C y_{i,j} \log(h(x)_{i,j}),
\end{aligned}
\end{equation}
where $N$ is the batch size; $y$ is the ground-truth label.
$\adv$ employs the SGD optimizer to optimize the weights.
During the training process, $\adv$ records the losses and the model weights at the end of each epoch so as to create the loss trajectory $T^L=\{L_1,...L_E\}$ of the local model.

Next, $\adv$ collects a set of model architectures $\mcl F = \{f_1,...,f_v\}$, where $v = |\mcl F|$.
Given the auxiliary dataset $\Daux$, $\adv$ uses $\gamma_i \in \Gamma$ and $f_j \in \mcl F$ to synthesize $l$ synthetic datasets denoted as $\Dsyn^{i,j,k}$, where $i$ is the index of $\gamma_i \in \Gamma$; 
$j$ denotes the index of $f_j \in \mcl F$; $k \in [1,l]$ is the index of a specific synthetic dataset.
Then, for each $\Dsyn^{i,j,k}$, $\adv$ trains a model of the same architecture on $\Dsyn^{i,j,k}$ for $E$ epochs, and records the loss trajectory $\mcl T^{i,j,k} \in \mbb R^E$.
$\adv$ eventually obtains a set $\mcl T$ that contains loss trajectories of different synthetic datasets generated by all combinations of $\gamma_i \in \Gamma$ and $f_j \in \mcl F$.


\subsection{Architecture Inference Stage}
Previous studies propose different methods in terms of matching the gradient descent trajectories or the distribution of the dataset so as to synthesize the synthetic dataset $\Dsyn$.
This results in various patterns of loss trajectories in the training of the local models. 
Therein, when a local model $h$ is trained on a $\Dsyn$ generated by a specific distillation algorithm targeting a real dataset $\Dreal$, the loss trajectory of $h$ during the training process is unique.
That is, when each $h_i$ of the same architecture is trained on each $\Dsyn^i$ distilled from the same $\Dreal$ but with different distillation algorithm, their loss trajectories $\mcl T_i$s are differentiable.
Similarly, for a fixed distillation algorithm, each $\Dsyn^i$ is synthesized by using a model $f_i$ of a specific model architecture and the same $\Dreal$.
When each of the local models $h_i$s are separately trained on each of the $\Dsyn^i$, their loss trajectories $\mcl T_i$s are also differentiable. 

Based on this finding, we design this architecture inference attack as described in \autoref{fig:aia_workflow}.
With the set of loss trajectories $\mcl T$ generated in the initialization step, for each $d_i$ and $f_j$, $\mcl A$ has $l$ loss trajectories $\mcl T^{i,j}$.
This enables $\adv$ to train an architecture inference attack model $A_A: \mbb R^E \mapsto \mbb R^{u\times v}$.
$\adv$ can use the SGD optimizer and $L_{\text{CE}}$ as illustrated in \autoref{eq:cross_entropy_loss} to train $A_A$, such that given a loss trajectory $T^L \in \mbb R^E$, $A_A(T^L) \in \mbb R^{u\times v}$ is the prediction of the distillation algorithm and the model architecture that are used in the dataset distillation process by $\victim$.
The attack process is illustrated in \autoref{alg:dd_aia}.

With the trained $A_A$, $A$ feeds the prepared loss trajectory $T^L$ of the pre-trained local model into $A_A$ to obtain the predicted distillation algorithm $\hat{\gamma}$ and model architecture $\hat{f}$.
With $\Dsyn$ and $\hat{f}$, $\adv$ can now train a local model $h$ that shares not only the same architecture with the victim model $f$, but also similar weights and performance.
This indicates that \textbf{the adversary completes turning the black box into a white box}, and is ready for the next stage.

\begin{algorithm}[t!]
\caption{Architecture Inference Stage}\label{alg:dd_aia}
\KwData{$\Dsyn$: A synthetic dataset;}
\KwResult{$\hat{i}, \hat{j}$: The indices of the distillation algo. and the model arch.;}
\tcc{Training the Attack Model}
TrajectorySet $\gets \emptyset$ \;
\tcc{For each distillation algo.}
\For{$\gamma_i \in \Gamma$}{
    \tcc{For each model arch.}
    \For{$f_j \in \mcl F$}{
        \tcc{Generate $l$ loss trajectory samples.}
        \For{$k \in [1, l]$}{
            Synthesize a $\Dsyn^{i,j,k}$
            Train a local model and record its loss trajectory $\mcl T^{ijk}$\;
            Add $(\mcl T^{ijk})$ into TrajectorySet\;
        }
    }
}
\While{not converged}{
    Train $A_A$ with TrajectorySet\;
}
\tcc{Evaluation}
Train a local model with $\Dsyn$ and record its loss trajectory $\mcl T$\;
$\hat{i}, \hat{j} \gets A_A(\mcl T)$\Comment*[r]{Make the predictions of $\hat{i}$ distillation algo. and $\hat{j}$ model arch.}
\Return $\hat{i}, \hat{j}$\;
\end{algorithm}

To further investigate the mechanism of the architecture inference attack, we conduct a theoretical analysis, where we discuss the relation between the synthetic datasets and their corresponding loss trajectories that are generated during the training process of the local model.
Weights in different neural networks of the same architecture tend to converge to a first-order stationary point, when the networks are trained on the same dataset \cite{haim2022reconstructing}.
We know that the synthetic datasets that are generated using the same algorithm and model architecture are similar to each other, whereas those generated with different algorithms or model architectures differ from each other.
Therefore, we would like to further prove that the training losses of two models of the same architecture tend to be similar, when they are separately optimized with two similar datasets.

\begin{figure*}[t!]
    \centering
    \includegraphics[width=.9\linewidth]{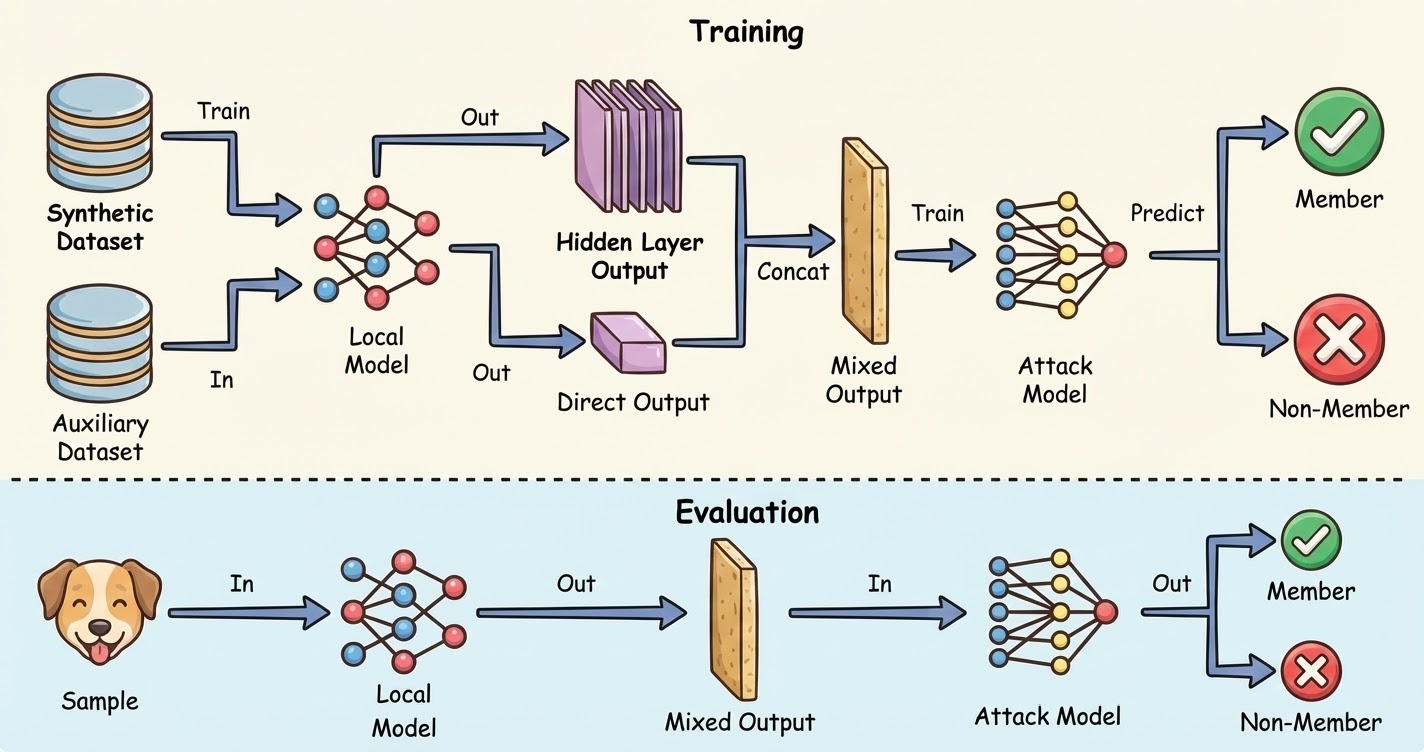}
    \caption{\textbf{Membership Inference Stage.}
    In the MIA, $\adv$ trains the membership attack model $A_M$ with the outputs from each layer of the local model $h$ to determine the membership of an arbitrary given sample $x$.
    }
    \label{fig:mia_workflow}
\end{figure*}

The following theorem shows that the distance between the weights of two neural networks of the same architecture at each epoch can be estimated by the number of epochs, the learning rate and the distance between the two datasets, when they are separately trained on two similar datasets.
Consequently, such two networks will produce losses that are close to each other at each epoch, when the distance between the datasets is sufficiently small.
\begin{thm}\label{thm:aia}
    Assume that a dataset $\mathcal{D}_2$ is a perturbed version of another dataset $\mathcal{D}_1$, i.e. 
    \begin{align}
        \mathcal{D}_2 = \{(\bm{x}_{n} + \bm{h}_{n}, y_n): \|\bm{h}_{n}\|_2 < \delta, n = 1,2,\dots,N\}, 
    \end{align}
    where $\delta > 0$ is a constant. We describe the training process on $\mathcal{D}_2$ as
    \begin{align}
        &\bm{\theta}_{2} (0) = \bm{\theta}_0 , \nonumber\\
        &\bm{\theta}_{2} (t+1) = \bm{\theta}_{2} (t)-\eta(t)\nabla \mathcal{L}_{\mathcal{D}_2}(\bm{\theta} _{2}(t)),
    \end{align}
    where $\bm{\theta}_{i}(t)$ denotes the weights in model $i$ at $t$ step; $\eta(t)$ is the learning rate at $t$ step; $L_{\mcl D_i}(\cdot)$ is the loss of the model trained on $\mcl D_i$.
        
    For any $\varepsilon > 0$, one has 
    \begin{align}
        \|\bm{\theta}_{1}(t) - \bm{\theta}_{2}(t)\| < \varepsilon ,\ t\in \{1,2,\dots,T_0\}, 
    \end{align}
    when $\delta $ is small enough. 
\end{thm}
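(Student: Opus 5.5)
We prove the statement by a finite-horizon induction on $t$. The training on $\mathcal{D}_1$ is understood to run the same recursion, $\bm{\theta}_1(0)=\bm{\theta}_0$ and $\bm{\theta}_1(t+1)=\bm{\theta}_1(t)-\eta(t)\nabla\mathcal{L}_{\mathcal{D}_1}(\bm{\theta}_1(t))$, with the same learning-rate schedule.

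We add the regularity assumption the argument needs: the per-sample loss $\ell(\bm{\theta};\bm{x},y)$ has a gradient $\nabla_{\bm{\theta}}\ell$ that is continuous, hence locally Lipschitz, jointly in $(\bm{\theta},\bm{x})$. This holds, for example, when $\ell$ is $C^2$. Then $\nabla\mathcal{L}_{\mathcal{D}}=\frac1N\sum_n\nabla_{\bm{\theta}}\ell(\bm{\theta};\bm{x}_n,y_n)$ inherits the same property.

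\textbf{Step 1: compact region.} Since $T_0$ is finite, the unperturbed iterates $\bm{\theta}_1(0),\dots,\bm{\theta}_1(T_0)$ form a finite set. Let $K$ be the closed unit neighbourhood of this set, and $X$ the closed unit neighbourhood of $\{\bm{x}_n\}$. Both are compact. On $K\times X$ we take constants $G$ and $B$ such that
\begin{align}
\|\nabla_{\bm{\theta}}\ell(\bm{\theta};\bm{x},y)-\nabla_{\bm{\theta}}\ell(\bm{\theta}';\bm{x},y)\|&\le G\|\bm{\theta}-\bm{\theta}'\|,\nonumber\\
\|\nabla_{\bm{\theta}}\ell(\bm{\theta};\bm{x},y)-\nabla_{\bm{\theta}}\ell(\bm{\theta};\bm{x}',y)\|&\le B\|\bm{x}-\bm{x}'\|.\nonumber
\end{align}
Averaging over $n$ gives, for $\bm{\theta},\bm{\theta}'\in K$ and $\delta\le 1$,
\begin{align}
\|\nabla\mathcal{L}_{\mathcal{D}_1}(\bm{\theta})-\nabla\mathcal{L}_{\mathcal{D}_1}(\bm{\theta}')\|\le G\|\bm{\theta}-\bm{\theta}'\|,\qquad \|\nabla\mathcal{L}_{\mathcal{D}_1}(\bm{\theta})-\nabla\mathcal{L}_{\mathcal{D}_2}(\bm{\theta})\|\le B\delta.\nonumber
\end{align}

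\textbf{Step 2: one-step recursion.} Set $e(t)=\|\bm{\theta}_1(t)-\bm{\theta}_2(t)\|$, so $e(0)=0$. Let $\bar\eta=\max_{t<T_0}\eta(t)$. Insert $\pm\eta(t)\nabla\mathcal{L}_{\mathcal{D}_1}(\bm{\theta}_2(t))$ into the difference of the two updates and apply the triangle inequality. As long as $\bm{\theta}_2(t)\in K$,
\begin{align}
e(t+1)\le(1+\bar\eta G)\,e(t)+\bar\eta B\delta .\nonumber
\end{align}
Unrolling this discrete Gronwall recursion gives
\begin{align}
e(t)\le \bar\eta B\delta\sum_{s=0}^{t-1}(1+\bar\eta G)^s\le \bar\eta B T_0(1+\bar\eta G)^{T_0}\,\delta=:C\delta .\nonumber
\end{align}

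\textbf{Step 3: closing the induction.} This step is the main obstacle. The Lipschitz constants are only valid on $K$, so we must show that $\bm{\theta}_2(t)$ never leaves $K$. We handle this by strengthening the induction hypothesis to "$e(s)\le C\delta$ and $\bm{\theta}_2(s)\in K$ for all $s\le t$", and we require $\delta<\min\{1,1/C\}$. Under this hypothesis Step 2 yields $e(t+1)\le C\delta<1$. Since $K$ contains the unit ball around $\bm{\theta}_1(t+1)$, this places $\bm{\theta}_2(t+1)$ in $K$, which closes the induction.

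Given $\varepsilon>0$, choosing $\delta<\min\{1,1/C,\varepsilon/C\}$ then gives $\|\bm{\theta}_1(t)-\bm{\theta}_2(t)\|<\varepsilon$ for all $t\le T_0$. The constant $C$ grows exponentially in $T_0$, which is why the statement is restricted to a finite horizon.

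For the paper's subsequent remark about the losses, the same compactness argument applies to $\mathcal{L}$ itself. It is Lipschitz in $(\bm{\theta},\bm{x})$ on $K\times X$, so the loss trajectories of the two models differ by $O(\delta)$ at every step up to $T_0$.
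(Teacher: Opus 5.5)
Your proof is correct and is essentially the paper's argument: Lipschitz constants on a compact region, a one-step discrete Gronwall bound, and $\delta$ chosen exponentially small in $T_0$. The paper derives its constant explicitly, obtaining $e(t+1)+\delta\le[1+\eta_0\sqrt{m}(4L_1^2+2L_2)](e(t)+\delta)$ from Lipschitz bounds on $f_i$ and $\partial f_i/\partial\theta_l$ pushed through the softmax, whereas you posit local Lipschitz continuity of $\nabla_{\bm\theta}\ell$ directly.

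Your Step 3 induction, which keeps $\bm\theta_2(t)$ inside the compact set, is something the paper's proof silently assumes when it applies its lemmas at $(\bm x_n+\bm h_n,\bm\theta_2(t))$, so your version is the more complete one. One wording fix: state the hypothesis as local Lipschitz continuity of $\nabla_{\bm\theta}\ell$, since continuity alone does not imply it. That hypothesis follows from the paper's (A1)--(A2), or from $\ell\in C^2$.
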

\noindent
We can then have the following corollary:
\begin{cor}\label{cor:aia}
    For any $\varepsilon > 0$, one has  
    \begin{align}
        |\mathcal{L}_{\mathcal{D}_1}(\bm{\theta}_{1}(t)) - \mathcal{L}_{\mathcal{D}_2}(\bm{\theta}_{2}(t))| < \varepsilon ,\ t\in \{1,2,\dots,T_0\}, 
    \end{align}
    when $\delta $ is small enough. 
\end{cor}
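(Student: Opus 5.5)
The plan is a discrete Gr\"onwall-type induction on the step index $t$. We compare the two gradient descent sequences step by step. Both sequences start from the same $\bm{\theta}_0$ and use the same learning rates $\eta(t)$. The training on $\mathcal{D}_1$ is understood by the same recursion with $\mathcal{L}_{\mathcal{D}_1}$. We write the loss as an empirical average
\begin{align}
    \mathcal{L}_{\mathcal{D}}(\bm{\theta}) = \frac{1}{N}\sum_{n=1}^N \ell(\bm{\theta}; \bm{x}_n, y_n).
\end{align}
We need two regularity assumptions, made explicit since the statement implicitly requires them.

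\emph{Assumption (i).} The map $(\bm{\theta},\bm{x}) \mapsto \nabla_{\bm{\theta}}\ell(\bm{\theta};\bm{x},y)$ is continuous. It is locally Lipschitz in $\bm{\theta}$ with constant $K$ on a compact neighbourhood $B$ of the relevant iterates. This holds for the smooth networks and the cross-entropy loss of \autoref{eq:cross_entropy_loss}.

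\emph{Assumption (ii).} On $B$, the gradient is uniformly continuous in $\bm{x}$. So there is a modulus $\omega(\delta)\to 0$ with $\|\nabla_{\bm{\theta}}\ell(\bm{\theta};\bm{x}+\bm{h},y)-\nabla_{\bm{\theta}}\ell(\bm{\theta};\bm{x},y)\| \le \omega(\delta)$ whenever $\|\bm{h}\|<\delta$ and $\bm{\theta}\in B$.

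Averaging (ii) over $n$ gives $\|\nabla\mathcal{L}_{\mathcal{D}_2}(\bm{\theta})-\nabla\mathcal{L}_{\mathcal{D}_1}(\bm{\theta})\|\le\omega(\delta)$ uniformly on $B$.

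Let $e(t)=\|\bm{\theta}_1(t)-\bm{\theta}_2(t)\|$, so that $e(0)=0$. Subtracting the two update rules and adding and subtracting $\nabla\mathcal{L}_{\mathcal{D}_1}(\bm{\theta}_2(t))$ gives
\begin{align}
    e(t+1) \le e(t) + \eta(t)\bigl(K e(t) + \omega(\delta)\bigr) = (1+\eta(t)K)\,e(t) + \eta(t)\,\omega(\delta).
\end{align}
Unrolling this recursion over $t\le T_0$ yields
\begin{align}
    e(t) \le \omega(\delta)\sum_{s=0}^{t-1}\eta(s)\prod_{r=s+1}^{t-1}(1+\eta(r)K) \le \omega(\delta)\,\bar\eta\, T_0\, e^{K\bar\eta T_0},
\end{align}
where $\bar\eta=\max_{t<T_0}\eta(t)$. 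Given $\varepsilon$, we choose $\delta$ so that the right-hand side is below $\varepsilon$. This is possible because $\omega(\delta)\to0$ and $T_0$ is finite.

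The main obstacle is justifying that all iterates $\bm{\theta}_2(t)$, $t\le T_0$, stay inside the compact set $B$ where $K$ and $\omega$ are valid. Network gradients are not globally Lipschitz, so this needs care. I would handle it by fixing the trajectory $\bm{\theta}_1(0),\dots,\bm{\theta}_1(T_0)$, which is independent of $\delta$. We take $B$ to be its closed $1$-neighbourhood, which is compact, and also fix the finitely many data points. We then run the induction with the strengthened hypothesis $e(t)<\min(\varepsilon,1)$. This hypothesis keeps $\bm{\theta}_2(t)\in B$, which licenses the estimates at the next step. Choosing $\delta$ to make the final bound less than $\min(\varepsilon,1)$ closes the induction.

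The corollary then follows from the same compactness argument. We bound
\begin{align}
    |\mathcal{L}_{\mathcal{D}_1}(\bm{\theta}_1)-\mathcal{L}_{\mathcal{D}_2}(\bm{\theta}_2)| \le |\mathcal{L}_{\mathcal{D}_1}(\bm{\theta}_1)-\mathcal{L}_{\mathcal{D}_1}(\bm{\theta}_2)| + |\mathcal{L}_{\mathcal{D}_1}(\bm{\theta}_2)-\mathcal{L}_{\mathcal{D}_2}(\bm{\theta}_2)|.
\end{align}
The first term is small by continuity of the loss on $B$ together with \autoref{thm:aia}. The second term is small by uniform continuity of $\ell$ in $\bm{x}$.
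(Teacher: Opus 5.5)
Your proof is correct and follows the paper's route. Both arguments use a discrete Gr\"onwall recursion to bound $\|\bm{\theta}_1(t)-\bm{\theta}_2(t)\|_2$; the paper gets the explicit linear bound $[(1+\eta_0\sqrt{m}(4L_1^2+2L_2))^t-1]\delta$ from Lipschitz constants of $f_i$ and $\partial f_i/\partial\theta_l$, where you have a modulus $\omega(\delta)$. Both then use continuity of the per-sample loss, and your triangle-inequality split amounts to the paper's joint continuity of $L$ at each $(\bm{x}_n,\bm{\theta}_1(t))$. Your induction keeping $\bm{\theta}_2(t)$ inside a compact neighbourhood of the $\mathcal{D}_1$-trajectory supplies a step the paper leaves implicit when it applies its constants on the set $D$.
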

\noindent
We prove Theorem\autoref{thm:aia} and Corollary\autoref{cor:aia} in \autoref{sect:proof} in the appendix.

According to the Corollary\autoref{cor:aia}, two similar datasets tend to result in similar loss trajectories when the local models are trained on them.
Hence, this explains why the adversary can train a classifier network with the loss trajectories to identify the dataset distillation method and the model architecture of the victim model.

\begin{algorithm}[t!]
\caption{Membership Inference Stage}\label{alg:dd_mia}
\KwData{$X$: An arbitrary data sample; $h$: A trained local model}
\KwResult{$0/1$: Non-member/Member;}
\tcc{Training the Attack Model}
TrainSet $\gets \emptyset$\;
\For{$(x, b) \in \Daux$}{Add $(h_o(x), b)$ into TrainSet\;}
\While{not converged}{
    Train $A_M$ with TrainSet\;
}
\tcc{Evaluation}
$X \sim P(D)$\Comment*[r]{Sampling}
\eIf{$A_M(X) > 0.5$}
    {\Return $1$\Comment*[r]{Member}}
    {\Return $0$\Comment*[r]{Non-member}}
\end{algorithm}

\subsection{Membership Inference Stage}


As depicted in \autoref{fig:mia_workflow}, with the well-trained local model $h(\cdot,\tau)$, $\adv$ starts to train an MIA attack model $A_M$.
In order to exploit the information left in $\Dsyn$ to the hilt, not only the output $h(x)$, the intermedia outputs of each layer in $h$ are also included in the input of $A_M$.
This contains the outputs from each layer in $h$, which is denoted as $h_o(x) = \{h^1(x), h^2(x), ..., h(x)\}$. 
$h^i(x)$ is the output of the $i$-th layer in $h$, and $h(x)$ denotes the output from the final layer in $h$.
In the training process of $A_M$, $\adv$ randomly samples from the member set and the non-member set so as to form a batch of size $N$. 
If a sample $x_i$ is in the member set, it is paired with the label $b_i=1$, otherwise $b_i=0$.
$\adv$ then trains $A_M$ with the binary cross entropy loss
\begin{equation}\label{eq:bin_ce}
\begin{aligned}
    L_{\text{BCE}}(b, A_M(h_o(x))) = & -\frac{1}{N} \sum^N_{i=1} b_i \cdot \log(A_M(h_o(x))) + \\
    & (1-b_i) \cdot \log(1 - A_M(h_o(x)))
\end{aligned}
\end{equation}
In evaluation, if $ A_M(h_o(x)) < 0.5$, we consider $x$ as a non-member sample; $x$ is determined to be a member when $A_M(h_o(x)) > 0.5$.
The attack process is illustrated in \autoref{alg:dd_mia}.

\subsection{Model Inversion Stage}


In this model inversion attack, we employ the diffusion model to be the generator network.
The main challenge in launching this attack using the diffusion model is that adding extra constraints into the loss function of the diffusion model will not result in promising outcomes.
The reason is that the DDPM outputs the predicted noise rather than the predicted clean image.
Thus, the constraints fail to pair the predicted noise with the original clean image.
To address this problem, we employ and improve the dual-network framework \cite{Benny2022DualDM} where two networks $\phi$ and $\psi$ together form the generative network as depicted in \autoref{fig:miv_workflow}.
$\phi$ is used to predict the noise $\epsilon$, and $\psi$ predicts the clean image $x_0$.
Thus, this enables $\adv$ to add constraints to the loss function of $\psi$.


During the training, given a noisy image $x_t$ generated via \autoref{eq:xt}, the pre-defined label $c$, and the uniformly sampled time step $t$, $\phi$ takes in them and predicts the noise $\epsilon_\phi(x_t, t, c)$. $c$ is set to be $\emptyset$ so as to perform unconditional sampling with a probability $p^u$. 
Thus, an estimation of $x_0$ can be derived as follows,
\begin{equation}
\begin{aligned}
    x_0^* = \frac{1}{\sqrt{\bar{\alpha}_t}} (x_t - \sqrt{1 - \bar{\alpha}_t} \epsilon_\phi(x_t, t, c)).
\end{aligned}
\end{equation}
Based on $\epsilon_\phi(x_t, t, c)$, $\adv$ obtains an estimation of the mean value $\mu_\phi(x_t, t, c)$ of $p_\phi(x_{t-1}|x_t)$ that is denoted as
\begin{equation}
\begin{aligned}
    \mu_\phi(x_t, t, c) = \frac{1}{\sqrt{\alpha_t}}x_t - \frac{1-\alpha_t}{\sqrt{1-\bar{\alpha}_t}\sqrt{\alpha_t}}\epsilon_\phi(x_t, t, c)
\end{aligned}
\end{equation}

\begin{algorithm}[t!]
\caption{Model Inversion Attack on Dataset Distillation}\label{alg:dd_miv}
\KwData{$\Dsyn$: The synthetic dataset;}
\KwResult{$x_0$: The recovered training sample;}
\tcc{Training}
\While{$L$ not converged}{
    $x_t, c \sim \Daux$\Comment*[r]{Sampling}
    $x_t \gets \sqrt{\bar{\alpha}_t}x_0 + \sqrt{1-\bar{\alpha}_t}\epsilon$\Comment*[r]{Forward}
    $\hat{\epsilon} \gets \phi(x_t, t, c)$\Comment*[r]{Noise Prediction}
    $x_0^* \gets \frac{1}{\sqrt{\bar{\alpha}_t}} (x_t - \sqrt{1 - \bar{\alpha}_t} \hat{\epsilon})$\Comment*[r]{Reverse}
    $\hat{x}_0, r_t \gets \psi(x_0^*, t, c)$\Comment*[r]{Direct Output}
    $\tilde{\mu}_t = \frac{\sqrt{\bar{\alpha}_{t-1}}\beta_t}{1-\bar{\alpha}_t}x_0 + \frac{\sqrt{\alpha_t}(1-\bar{\alpha}_{t-1})}{1-\bar{\alpha}_t}x_t$\Comment*[r]{Estimation}
    $\mu_{\phi,\psi}(x_t, t, c) = r_t \cdot \hat{x}_0 + (1-r_t)\cdot x_0^*$\Comment*[r]{Joint Output}
    $L \gets \lambda_1 \cdot L_{\text{MSE}}(\epsilon,\hat{\epsilon}) + \lambda_2 \cdot L_{\text{MSE}}(x_0, \hat{x}_0) + \lambda_3 \cdot L_{\text{MSE}}(\tilde{\mu}_t, \mu_{\phi,\psi}(x_t, t, c)) + \lambda_4 \cdot L_{\text{cls}} + \lambda_5 \cdot L_{\text{traj}}$\Comment*[r]{Compute Loss}
    Update $\phi, \psi$\;
}
\tcc{Evaluation}
$x_T \gets \mcl N(\mbf 0, \mbf I)$\Comment*[r]{Initialize Noise}
$c \gets \text{random}(0, C)$\Comment*[r]{Assign Class}
\For{$t$ in $\{T,...,1\}$}{
    $\hat{\epsilon} = \phi(x_t, t, c)$\;
    
    $\mu^S_{\hat{\epsilon}} = \frac{x_t - \sqrt{1-\bar{\alpha}_t}\hat{\epsilon}}{\sqrt{\alpha_t}} + \sqrt{1 - \bar{\alpha}_{t-1}} \cdot \hat{\epsilon}$\;
    
    $x_0^* \gets \frac{1}{\sqrt{\bar{\alpha}_t}} (x_t - \sqrt{1 - \bar{\alpha}_t} \epsilon_\phi(x_t, t, c))$\;
    $\hat{x}_0, r_t \gets \psi(x_0^*, t, c)$\;
    
    $\mu^S_{\hat{x}_0} = \sqrt{\bar{\alpha}_{t-1}}\hat{x}_0 + \sqrt{1 - \bar{\alpha}_{t-1}} \cdot \frac{x_t - \sqrt{\bar{\alpha}_t}\hat{x}_0}{\sqrt{1 - \bar{\alpha}_t}}$\;
    
    $x_{t-1} \gets r_t \cdot \mu^S_{\hat{x}_0} + (1-r_t)\cdot \mu^S_{\hat{\epsilon}}$\;
}
\Return $x_0$
\end{algorithm}

$\psi$ is then fed with $x_0^*$, $t$ and $c$ to output $\hat{x}_0$ and a dynamic value $r_t$, which is denoted as $(\hat{x}_0, r_t) = \psi(x_0^*, t, c)$.
$\hat{x}_0 \in \R^{M}$ is the predicted clean image;
$r_t \in \R^{\frac{M}{3}}$ is used to balance the impact of $\phi$ and $\psi$ on the final output $\mu_{\phi,\psi}(x_t, t)$.
The $\frac{M}{3}$ indicates that $r_t$ can be considered as a feature map having the same width and height of $\hat{x}_0$, but with only $1$ channel.
Then, $\adv$ can further obtain the estimation of $\mu_\phi(x_t, t, c)$ based on $\hat{x}_0$, denoted as 
\begin{equation}
\begin{aligned}
    \mu_{\hat{x}_0}(x_0^*, t) = \frac{\sqrt{\bar{\alpha}_{t-1}}\beta_t}{1-\bar{\alpha}_t}\hat{x}_0 + \frac{\sqrt{\alpha_t}(1-\bar{\alpha}_{t-1})}{1-\bar{\alpha}_t}x_t.
\end{aligned}
\end{equation}
Eventually, the final estimation of $\mu_{\phi,\psi}(x_t, t, c)$ is derived from
\begin{equation}
\begin{aligned}
    \mu_{\phi,\psi}(x_t, t, c) = r_t \cdot \mu_{\hat{x}_0}(x_t, t, c) + (1-r_t)\cdot \mu_{\hat{\epsilon}}(x_t, t, c)
\end{aligned}
\end{equation}

\begin{figure*}[t!]
    \centering
    \includegraphics[width=.95\textwidth]{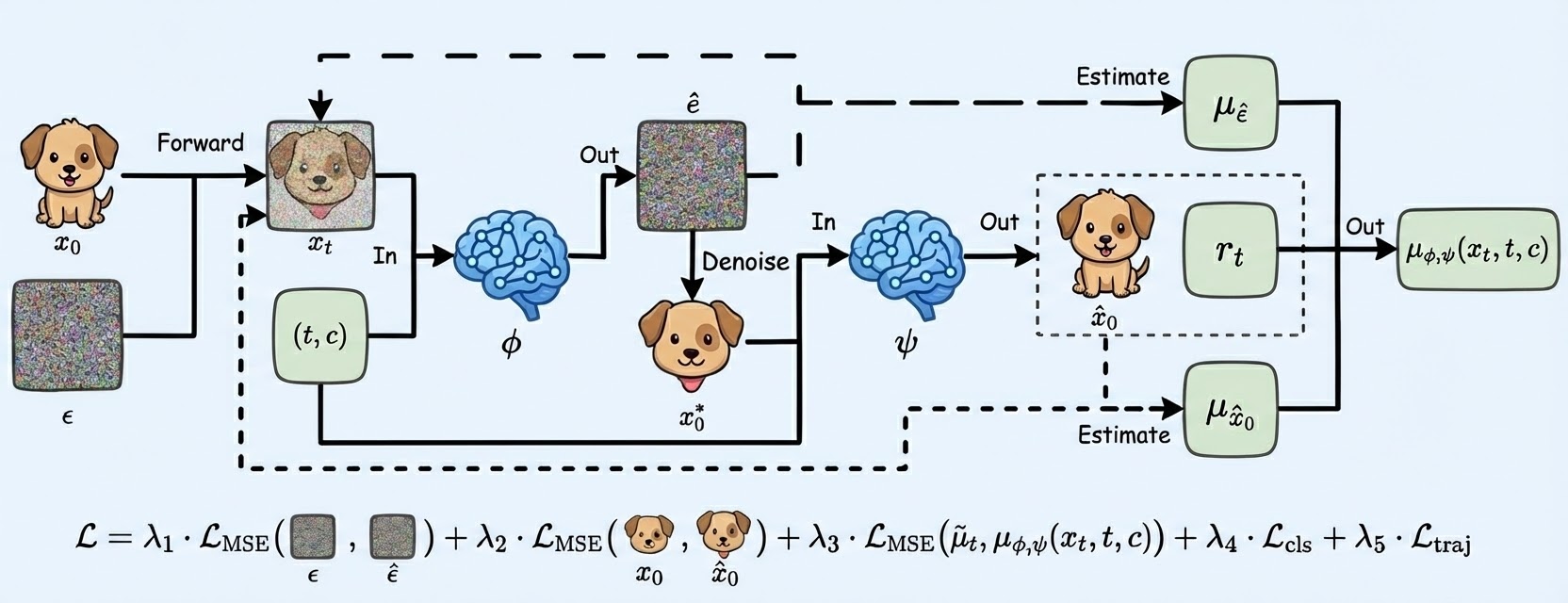}
    \caption{\textbf{Model Inversion Stage.} The framework employs two models $\phi$ and $\psi$. $\phi$ is trained to predict the noise $\epsilon$ in the DDPM way, whereas $\psi$ is trained to output the clean image $x_0$ and the coefficient $r_t$ with the denoised image $x^*_0$ as the input. The final output of this dual network framework is the combination of their outputs weighted using $r_t$. Notably, except for the first term, the rest of the terms in the loss funcion are used to regularize $\psi$.
    }
    \label{fig:miv_workflow}
\end{figure*}

To train the $\phi$ and $\psi$, $\adv$ minimizes the loss
\begin{equation}\label{eq:loss_diff_miv}
\begin{aligned}
    L = \ & \lambda_1 \cdot L_{\text{MSE}}(\epsilon,\epsilon_\phi(x_t, t, c)) + \lambda_2 \cdot L_{\text{MSE}}(x_0, \hat{x}_0)\\
    & + \lambda_3 \cdot L_{\text{MSE}}(\tilde{\mu}_t, \mu_{\phi,\psi}(x_t, t, c)) + \lambda_4 \cdot L_{\text{cls}} + \lambda_5 \cdot L_{\text{traj}}.
\end{aligned}
\end{equation}
Here, $\lambda_i$s are the weight parameters;
the first term makes $\phi$ predict $\epsilon$ more accurately;
the second term forces $\psi$ to make more precise prediction of $x_0$;
the third term is a constraint that makes $\psi$ output a more reasonable $r_t$, where $\tilde{\mu}_t = \frac{\sqrt{\bar{\alpha}_{t-1}}\beta_t}{1-\bar{\alpha}_t}x_0 + \frac{\sqrt{\alpha_t}(1-\bar{\alpha}_{t-1})}{1-\bar{\alpha}_t}x_t$;
$L_{\text{MSE}}$ denotes the MSE loss:
\begin{equation}
\begin{aligned}
    L_{\text{MSE}}(X, Y) = \frac{1}{n}\sum_i^n(x_i - y_i)^2, \forall x_i \in X, y_i \in Y;
\end{aligned}\label{eq:mse_loss}
\end{equation}
the fourth term $L_{\text{cls}}$ is the classification defined as
\begin{equation}
\begin{aligned}
    L_{\text{cls}} = L_{CE}(c, h(\hat{x}_0)).
\end{aligned}
\end{equation}
This requires the generated $\hat{x}_0$ to make $h$ yield the pre-defined label $c$, such that the diffusion model can better learn the distribution of the training dataset on which $h$ is trained;
the fifth term $L_{\text{traj}}$ is the trajectory loss, defined as
\begin{equation}
\begin{aligned}
    L_{\text{traj}}(\hat{x}_0, c, h, i) = \frac{\|\eta \nabla L_{\text{CE}}(c, h(\hat{x}_0;\tau_i^*))\|^2}{\|\tau^*_{i+1} - \tau^*_{i}\|^2}
\end{aligned}\label{eq:traj_loss}
\end{equation}
Here, $\tau^*_i$ denotes the vector of the flattened weights of $h$ derived at $i$ epoch during the training process on $\Dsyn$.
In short, by minimizing $\Ltraj$, $\adv$ aims to make the synthetic samples generated by $\mcl G$ to let $h$ produce the same loss trajectory as that of $\Dsyn$.
Notably, except for the first term, the other terms are not used to update the weights in $\phi$ but only those in $\psi$.

For sampling, $\adv$ uses the DDIM \cite{Song2021DDIM} as follows:
\begin{equation}
\begin{aligned}
    \mu_{\phi,\psi} = r_t \cdot \mu^S_{\hat{x}_0} + (1-r_t)\cdot \mu^S_{\hat{\epsilon}},
\end{aligned}
\end{equation}
where
\begin{equation}
\begin{aligned}
    \mu^S_{\hat{x}_0} = \sqrt{\bar{\alpha}_{t-1}}\hat{x}_0 + \sqrt{1 - \bar{\alpha}_{t-1}} \cdot \frac{x_t - \sqrt{\bar{\alpha}_t}\hat{x}_0}{\sqrt{1 - \bar{\alpha}_t}},
\end{aligned}
\end{equation}
and
\begin{equation}
\begin{aligned}
    \mu^S_{\hat{\epsilon}} = \frac{x_t - \sqrt{1-\bar{\alpha}_t}\hat{\epsilon}}{\sqrt{\alpha_t}} + \sqrt{1 - \bar{\alpha}_{t-1}} \cdot \hat{\epsilon}.
\end{aligned}
\end{equation}
The $\sigma_t$ is omitted here, because in DDIM, $\sigma_t = 0, \ \forall \ t$.
The attack process is illustrated in \autoref{alg:dd_miv}.

\section{Experiments}
\subsection{Settings}

\header{Datasets.}
In this experiment, we use CIFAR-10, CIFAR-100, CIFAR-100-Ext, CINIC-10, TinyImageNet-200, and ImageNet.
\begin{itemize}
    

    \item \textbf{CIFAR-10}\footnote{https://www.cs.toronto.edu/~kriz/cifar.html\label{fn:cifar}}: 
    $50,000$ samples in the training set; 
    $10,000$ samples in the test set; 
    a sample is of $3 \times 32 \times 32$ pixels;
    $10$ classes in total.

    \item \textbf{CINIC-10}\footnote{https://github.com/BayesWatch/cinic-10}: 
    $90,000$ samples in the training set; 
    $90,000$ samples in the test set; 
    a sample is of $3 \times 32 \times 32$ pixels;
    $10$ classes in total;
    it is used as an auxiliary dataset to the CIFAR-10 dataset.

    \item \textbf{CIFAR-100}\footref{fn:cifar}:
    $50,000$ samples in the training set; 
    $10,000$ samples in the test set; 
    a sample is of $3 \times 32 \times 32$ pixels;
    $100$ classes in total.

    \item \textbf{CIFAR-100-Ext}\footnote{https://www.kaggle.com/datasets/dunky11/cifar100-100x100-images-extension}: 
    $50,000$ samples in the training set; 
    $10,000$ samples in the test set; 
    a sample is of $3 \times 32 \times 32$ pixels;
    $10$ classes in total;
    it is used as an auxiliary dataset to the CIFAR-100 dataset.
    
    \item \textbf{TinyImageNet-200}\footnote{https://www.kaggle.com/c/tiny-imagenet}:
    $100,000$ samples in the training set; 
    $10,000$ samples in the validation set;
    $10,000$ samples in the test set; 
    A sample is of $3 \times 64 \times 64$ pixels;
    $200$ classes in total.

    \item \textbf{ImageNet}\footnote{https://image-net.org/}:
    $14,197,122$ annotated images in the training and validation sets; 
    Samples are cropped into $3 \times 64 \times 64$ pixels before use; 
    $1,000$ classes in total;
    it is used as an auxiliary dataset to the Tiny Imagenet dataset.
\end{itemize}


\header{Dataset Distillation Algorithms.}
The following methods are employed as the baselines for generating the synthetic datasets.
\begin{itemize}
    \item \textbf{MTT}\cite{Cazenavette2022DDMatchingTrainingTrajectories}:
    MTT optimizes the synthetic dataset such that the model trained on the synthetic dataset is guided to a similar state as that of the model trained on the real dataset. 
    This is achieved by adding a constraint that shrinks the distance between the loss trajectories of the models trained on the synthetic and the real datasets.

    \item \textbf{FTD}\cite{Du2023FlatTrajectoryDistillation}: 
    FTD optimizes the synthetic dataset such that it provides a flat trajectory for the model weights to descend.
    By inducing random noises into the teacher models, FTD performs perturbation on the weights of the teacher models on purpose so as to form the flat trajectory.
    The objective is to make the synthetic dataset robust to the weight perturbation.

    \item \textbf{DATM}\cite{Guo2023TowardsLosslessDatasetDistillation}:
    DATM emphasizes the training stage of the trajectories that are selected to match has significant effectiveness on the synthetic dataset.
    Early trajectories are more important for the low-cardinality datasets, whereas the late trajectories are more dominant for the large datasets.
    Thus, DATM aligns the difficulty of the generated patterns with the size of the synthetic dataset.
    
    \item \textbf{SelMatch}\cite{lee2024selmatch}:
    Selmatch utilizes selection-based initialization and partial updates in the trajectory matching dataset distillation process.
    This manages the difficulty level of the synthetic dataset corresponding to its size.
    
    \item \textbf{SeqMatch}\cite{du2024SeqMatch}:
    Seqmatch sequentially generates the synthetic samples so as to avoid the failure of the distilled dataset to extract the high-level features learned by the neural network in the latter epochs.
    This issue happens when a large amount of synthetic samples is being optimized.
    By adaptively optimizing the synthetic dataset, Seqmatch effectively addresses this issue.
\end{itemize}

\header{Evaluation Metrics.}
For the model trained on $\Dsyn$, we use the top-1 accuracy (Top-1 Acc.) to evaluate its performance.
Balanced accuracy (BA), true positive rate at low false positive rate (T@LF), area under curve (AUC), and Full log-scale ROC \cite{li2024seqmia, carlini2022membership,liu2021encodermi} are used to evaluate the MIA attack.
As for the AIA attack model, we use the Top-1 Acc. to measure its efficacy.
Lastly, attack accuracy (Atk. Acc.) and K-Nearest Neighbor Distance (KNN Dist.) \cite{Kahla2022LabelOnlyMIV,liu2024unstoppable,boenisch2023curious,issa2024rve} are used to evaluate the MIV attack.
Below, we illustrate the descriptions of each metric in detail.
\begin{itemize}
    \item \textbf{Top-$n$ accuracy} is the ratio of the number of the predicted classes falling within the $n$ classes having the highest confidence score to the total number of predictions.

    \item \textbf{BA} is the ratio of the number of the correct prediction made by the model to the total number of predictions on a balanced dataset consisting of member and non-members.
    
    \item \textbf{T@LF} is the TPR at a single low FPR that gives an overview of the attack performance on a small portion of samples.
    In this evaluation, T@LF is set to be the TPR at $0.1$\% FPR.
    
    \item \textbf{AUC} is the area under the receiver operating characteristic (ROC) curve that is formed by the TPR and FPR of a membership inference attack for all possible thresholds.

    \item \textbf{Full log-scale ROC} draws the ROC curve in logarithm scale so as to emphasize the true positive rates in low false positive rate regions. This provides a more comprehensive view of the attack performance.

    \item \textbf{Atk. Acc.} is defined as the ratio of the number of synthetic samples that are classified into the target category by the evaluation model to the total number of synthetic samples in the recovered dataset.
    The evaluation model is trained on the real dataset.
    
    \item \textbf{KNN Dist.} is the shortest distance between a recovered sample and the real samples in the feature space.
    This metric measures the similarity at the feature level.
    The features of the recovered and the real samples are the outputs of the penultimate layer of the evaluation model mentioned above.
    For the experiments conducted in the same dataset, we consistently use the same evaluation model that is trained to its best.
\end{itemize}

\header{Implementations.}
ConvNet, AlexNet, ResNet18, and VGG11BN are employed for victim models.
For the membership and architecture inference attack models, they share a similar 3-layer fully connected network architecture, but with different input and output dimensionality.
The UNet architecture is employed for both $\phi$ and $\psi$.
Local models are trained for $100$ epochs using the SGD optimizer with the learning rate starting at $0.01$ and decreasing by $0.2$ every $25$ epochs;
ConvNets for generating the trajectories are trained for $100$ epochs using the SGD optimizer with a learning rate $0.01$;
attack models are uniformly trained for $100$ epochs using the SGD optimizer with the learning rate starting at $0.01$.
The learning rate decreases in the cosine annealing way and finally ends at $0.0001$;
$\phi$ and $\psi$ are trained using the ADAM optimizer for $10,000$ epochs with a consistent learning rate $0.0001$.
The $\lambda$s are set to $1$ by default.
Synthetic datasets are generated with different numbers of images per class (IPC) to evaluate the impact of the number of synthetic samples on the privacy leak.

\begin{table*}[t!]
\caption{Evaluation of the IRA on the SOTA Dataset Distillation Algorithms on the CIFAR10 Dataset.}
\centering
\footnotesize
\renewcommand{\arraystretch}{1.2}
\setlength{\tabcolsep}{1.3pt}
\begin{tabular}{cc|cccc|cccc|cccc|cccc}
    \toprule
    \multirow{3}{*}{Method} & \multirow{3}{*}{Model Arch.} & \multicolumn{4}{c|}{Test Top-1 Acc. \%} & \multicolumn{4}{c|}{MIA BA$\uparrow$/AUC$\uparrow$/T@LF \%$\uparrow$} & \multicolumn{4}{c|}{AIA Top-1 Acc. \%$\uparrow$} & \multicolumn{4}{c}{MIV Atk. Acc.$\uparrow$/KNN Dist$\downarrow$}\\
    & & \multicolumn{4}{c|}{IPC} & \multicolumn{4}{c|}{IPC} & \multicolumn{4}{c|}{IPC} & \multicolumn{4}{c}{IPC}\\
    & & 100 & 250 & 500 & 1000 & 100 & 250 & 500 & 1000 & 100 & 250 & 500 & 1000 & 100 & 250 & 500 & 1000\\

    \midrule

    \multirow{4}{*}{MTT} & ConvNet & 58.4 & 60.4 & 62.8 & 63.9 & 0.77/0.85/12.5 & 0.81/0.89/21.5 & 0.81/0.89/20.1 & 0.82/0.88/19.5 & \multirow{20}{*}{79.1} & \multirow{20}{*}{80.8 } & \multirow{20}{*}{80.4} & \multirow{20}{*}{82.1} & 0.81/682.4 & 0.83/634.2 & 0.87/591.2 & 0.89/574.7\\
    & AlexNet & 56.7 & 58.2 & 60.5 & 61.3 & 0.71/0.78/4.6 & 0.72/0.80/5.9 & 0.80/0.88/20.2 & 0.80/0.88/20.2 &  &  &  &  & 0.79/653.7 & 0.82/613.2 & 0.86/588.2 & 0.86/572.8\\
    & ResNet18 & 54.6 & 59.3 & 63.1 & 64.3 & 0.70/0.77/3.2 & 0.80/0.88/17.2 & 0.80/0.87/18.2 & 0.82/0.89/19.9 &  &  &  &  & 0.77/758.7 & 0.81/603.6 & 0.87/582.3 & 0.89/552.6\\
    & VGG11-BN & 57.2 & 60.7 & 63.5 & 65.9 & 0.77/0.85/11.2 & 0.79/0.87/14.3 & 0.81/0.88/16.7 & 0.89/0.95/53.6 &  &  &  &  & 0.81/633.7 & 0.84/617.8 & 0.86/599.2 & 0.89/542.8\\

    \cline{1-10}
    \cline{15-18}
    \multirow{4}{*}{FTD} & ConvNet & 68.8 & 72.4 & 73.7 & 75.4 & 0.80/0.88/20.7 & 0.82/0.89/22.0 & 0.89/0.95/51.0 & 0.90/0.95/46.4 &  &  &  &  & 0.90/551.3 & 0.91/542.5 & 0.91/537.9 & 0.92/512.4\\
    & AlexNet & 66.3 & 68.5 & 72.8 & 73.6 & 0.82/0.89/21.6 & 0.78/0.86/13.8 & 0.89/0.95/53.5 & 0.90/0.96/55.8 &  &  &  &  & 0.89/553.4 & 0.92/508.4 & 0.92/515.6 & 0.92/509.1\\
    & ResNet18 & 67.5 & 72.1 & 74.9 & 77.8 & 0.80/0.88/16.2 & 0.82/0.89/25.0 & 0.90/0.96/57.9 & 0.89/0.95/54.1 &  &  &  &  & 0.89/546.8 & 0.90/565.8 & 0.93/502.7 & 0.93/494.5\\
    & VGG11-BN & 69.1 & 73.9 & 75.2 & 77.6 & 0.81/0.89/16.0 & 0.82/0.90/20.1 & 0.90/0.96/56.4 & 0.89/0.95/51.4 &  &  &  &  & 0.90/541.9 & 0.93/515.9 & 0.93/507.2 & 0.93/491.2\\

    \cline{1-10}
    \cline{15-18}
    \multirow{4}{*}{DATM} & ConvNet & 74.7 & 77.9 & 78.1 & 81.7 & 0.89/0.95/50.9 & 0.93/0.97/75.1 & 0.94/0.97/75.7 & 0.94/0.98/77.6 &  &  &  &  & 0.94/489.9 & 0.94/486.4 & 0.94/467.2 & 0.94/488.9\\
    & AlexNet & 71.4 & 74.6 & 77.2 & 79.2 & 0.90/0.96/51.0 & 0.89/0.95/49.5 & 0.90/0.96/53.3 & 0.93/0.97/71.0 &  &  &  &  & 0.90/569.2 & 0.93/506.2 & 0.94/468.3 & 0.94/476.1\\
    & ResNet18 & 76.4 & 78.5 & 81.6 & 82.5 & 0.93/0.97/76.0 & 0.94/0.97/74.7 & 0.94/0.98/74.9 & 0.94/0.97/72.7 &  &  &  &  & 0.94/492.1 & 0.94/487.4 & 0.94/465.8 & 0.94/472.4\\
    & VGG11-BN & 75.2 & 78.1 & 80.9 & 82.2 & 0.90/0.95/53.7 & 0.93/0.97/76.3 & 0.93/0.98/77.1 & 0.93/0.97/73.5 &  &  &  &  & 0.93/511.0 & 0.94/489.2 & 0.94/487.5 & 0.94/492.4\\

    \cline{1-10}
    \cline{15-18}
    \multirow{4}{*}{SelMatch} & ConvNet & 78.1 & 80.2 & 81.0 & 82.6 & 0.88/0.94/53.1 & 0.91/0.96/66.7 & 0.92/0.96/67.1 & 0.93/0.97/73.7 &  &  &  &  & 0.94/489.1 & 0.94/467.3 & 0.94/477.6 & 0.94/481.5\\
    & AlexNet & 75.4 & 76.1 & 78.5 & 79.8  & 0.86/0.93/37.1 & 0.91/0.96/63.3 & 0.91/0.96/64.9 & 0.93/0.97/77.6 &  &  &  &  & 0.92/521.4 & 0.94/488.9 & 0.94/471.1 & 0.94/478.3\\
    & ResNet18 & 78.2 & 82.5 & 85.3 & 83.2 & 0.88/0.95/45.7 & 0.92/0.97/68.3 & 0.92/0.97/69.1 & 0.94/0.98/74.8 &  &  &  &  & 0.94/468.2 & 0.94/477.0 & 0.94/467.9 & 0.95/457.2\\
    & VGG11-BN & 79.2 & 81.4 & 82.3 & 82.2 & 0.89/0.95/53.3 & 0.92/0.97/74.4 & 0.92/0.97/70.6 & 0.93/0.97/74.2 &  &  &  &  & 0.94/471.3 & 0.94/482.9 & 0.94/469.2 & 0.94/477.5\\

    \cline{1-10}
    \cline{15-18}
    \multirow{4}{*}{SeqMatch} & ConvNet & 68.4 & 69.2 & 72.3 & 74.2 & 0.79/0.87/17.1 & 0.81/0.89/25.6 & 0.82/0.89/18.0 & 0.85/0.92/29.6 &  &  &  &  & 0.90/552.6 & 0.90/562.7 & 0.91/547.9 & 0.91/533.2\\
    & AlexNet & 67.9 & 69.9 & 72.7 & 73.1 & 0.77/0.85/20.8 & 0.79/0.86/24.5 & 0.83/0.90/27.7 & 0.85/0.92/42.7 &  &  &  &  & 0.90/556.2 & 0.90/562.8 & 0.91/545.0 & 0.90/547.1\\
    & ResNet18 & 70.5 & 73.3 & 75.6 & 77.9 & 0.83/0.90/32.5 & 0.83/0.88/22.0 & 0.84/0.91/30.7 & 0.84/0.92/42.9 &  &  &  &  & 0.90/571.2 & 0.91/567.4 & 0.91/542.8 & 0.92/52 6.7\\
    & VGG11-BN & 71.6 & 73.4 & 76.9 & 78.4 & 0.81/0.89/20.5 & 0.82/0.89/26.4 & 0.84/0.92/40.9 & 0.86/0.93/41.1 &  &  &  &  & 0.91/572.3 & 0.90/569.4 & 0.91/554.4 & 0.91/563.1\\
    \bottomrule
\end{tabular}\label{tab:eval_cifar10}
\end{table*}

\subsection{Evaluation}
The training dataset is first randomly divided into two subsets.
The first subset is the real dataset that contains $80\%$ of the training samples. 
It is used for the dataset distillation.
The other subset consisting of the rest $20\%$ is included in the auxiliary dataset as the member samples.
Both of the subsets are class-wise balanced.
For each of the five selected dataset distillation methods, we use the four model architectures to generate synthetic datasets.
The synthetic datasets are then utilized to train the local models and generate loss trajectories.
For each class in all the $20$ combinations of the given algorithms and architectures, we generate $500$ loss trajectories.
In total, there are $10,000$ trajectories, which are divided into a training set and a test set with balanced classes according to the ratio $4:1$.
For the local models, we consistently use the ResNet18 as the model architecture.
The IRA is then launched using the local models and the loss trajectories along with the auxiliary datasets.

\header{The Architecture Inference Stage.}
As demonstrated in \autoref{tab:eval_cifar10}, \autoref{tab:eval_cifar100}, and \autoref{tab:eval_tinyimagenet}, the top-1 test accuracy of the AIA reaches over $75\%$ in most cases in the classification task, where there are $20$ classes.
This suggests that the loss trajectories generated by the same model architecture that are trained on the synthetic datasets generated by different methods are separable.
In addition, different model architectures used in the same distillation process also result in separable loss trajectories.
We also discover that the varying IPC does not have significant effect on the classification results.

Notably, the synthetic datasets generated via different methods will naturally contain the gradient trajectory information by optimizing the corresponding objective function.
For example, a flat trajectory is the objective of FTD \cite{Du2023FlatTrajectoryDistillation}, while a range of weight trajectories is selected to be the optimization objective in DATM \cite{Guo2023TowardsLosslessDatasetDistillation}.
 Thus, the loss trajectories of the synthetic datasets generated by these various methods are separable because of the intrinsic different optimization objectives of the methods.
 For the loss trajectories of the same method with different architectures, the weight trajectory information differs because of the different model architectures used in the distillation process.
 A similar phenomenon can also be found in \cite{zhang2023plot}, where different model architectures show distinctive patterns that the adversary can further exploit.

\begin{figure}[t!]
    \centering
    \includegraphics[width=\linewidth]{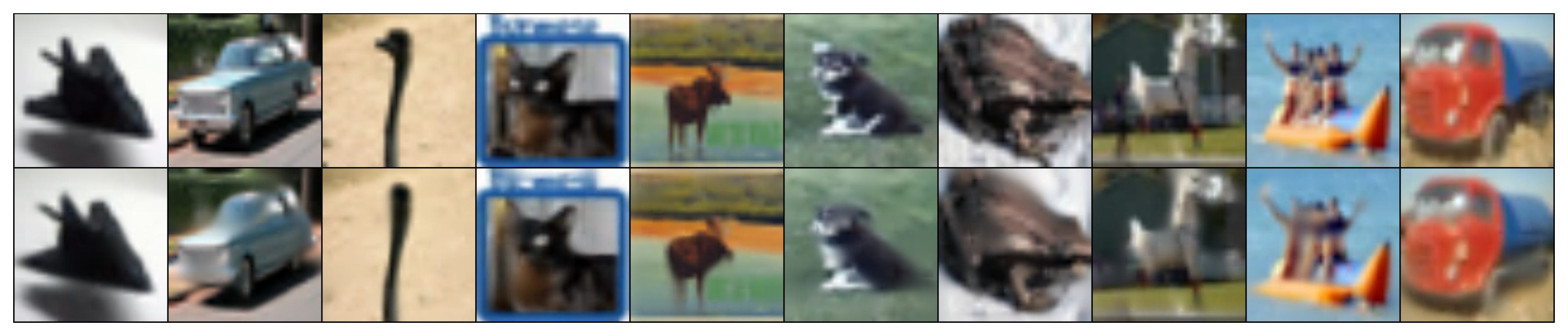}
    \caption{\textbf{MIV Qualitative Results on CIFAR-10.} The upper row are the samples in each class from the real dataset. The bottom row are the synthetic samples generated by the diffusion model.}
    \label{fig:miv_res_cifar10}
\end{figure}

\header{The Membership Inference Stage.}
The MIA attack evaluation results are listed in \autoref{tab:eval_cifar10}, \autoref{tab:eval_cifar100}, and \autoref{tab:eval_tinyimagenet}.
The full log-scale ROC curves are depicted in \autoref{fig:atk_eval_cifar10}, \autoref{fig:atk_eval_cifar100}, and \autoref{fig:atk_eval_tinyimagenet}.
First, we observe that the IPC has a significant impact on the test accuracy of the local model, whereas the selection of the model architecture in the distillation process weighs relatively less.
A higher IPC leads to a better test accuracy in each single experiment among these three datasets.

Further, we also notice that for a local model that has better performance, the BA, AUC, and T@LF of the MIA tend to be higher.
This indicates that a high-quality synthetic dataset is more likely to cause privacy leakage.
For example, the SelMatch dataset algorithm with IPC=$1,000$ and the ResNet18 model architecture can generate a high-quality synthetic dataset that enables the local model to have $83.2\%$ test accuracy on the test dataset of CIFAR-10.
However, the MIA attack performance on this model is also surprisingly high, where the BA$=0.94$, AUC=$0.98$, and T@LF$=74.8$.

Compared with the performance of the other SOTA MIAs in different scenarios \cite{li2024seqmia,das2025blind,shang2025defending}, where the adversary only has black-box access to the local model, the evaluation results in the MIA of IRA are surprisingly high.
We believe that the following facts lead to these results:
The victim synthesizes the synthetic dataset using a high-fidelity dataset distillation algorithm.
Due to the inherent loss function in the distillation process, such a distilled dataset is able to guide the locally trained local model to have a gradient descent trajectory similar to that of the victim model trained directly with the real dataset.
By releasing this synthetic dataset to the public, the victim grants white-box access to the synthetic dataset to the adversary.
The adversary is thus able to train a local model that has not only comparable performance, but also similar weights to the victim model.
In other words, \textbf{it is the victim who indirectly gives out the victim model to the adversary}.
With white-box access to the local model, the adversary can then exploit all the information within.
The output of each layer in the local model can now be used to train the attack model.
Together, these facts contribute to the high attack performance.

\begin{figure*}[t!]
    \centering
    \includegraphics[width=.95\textwidth]{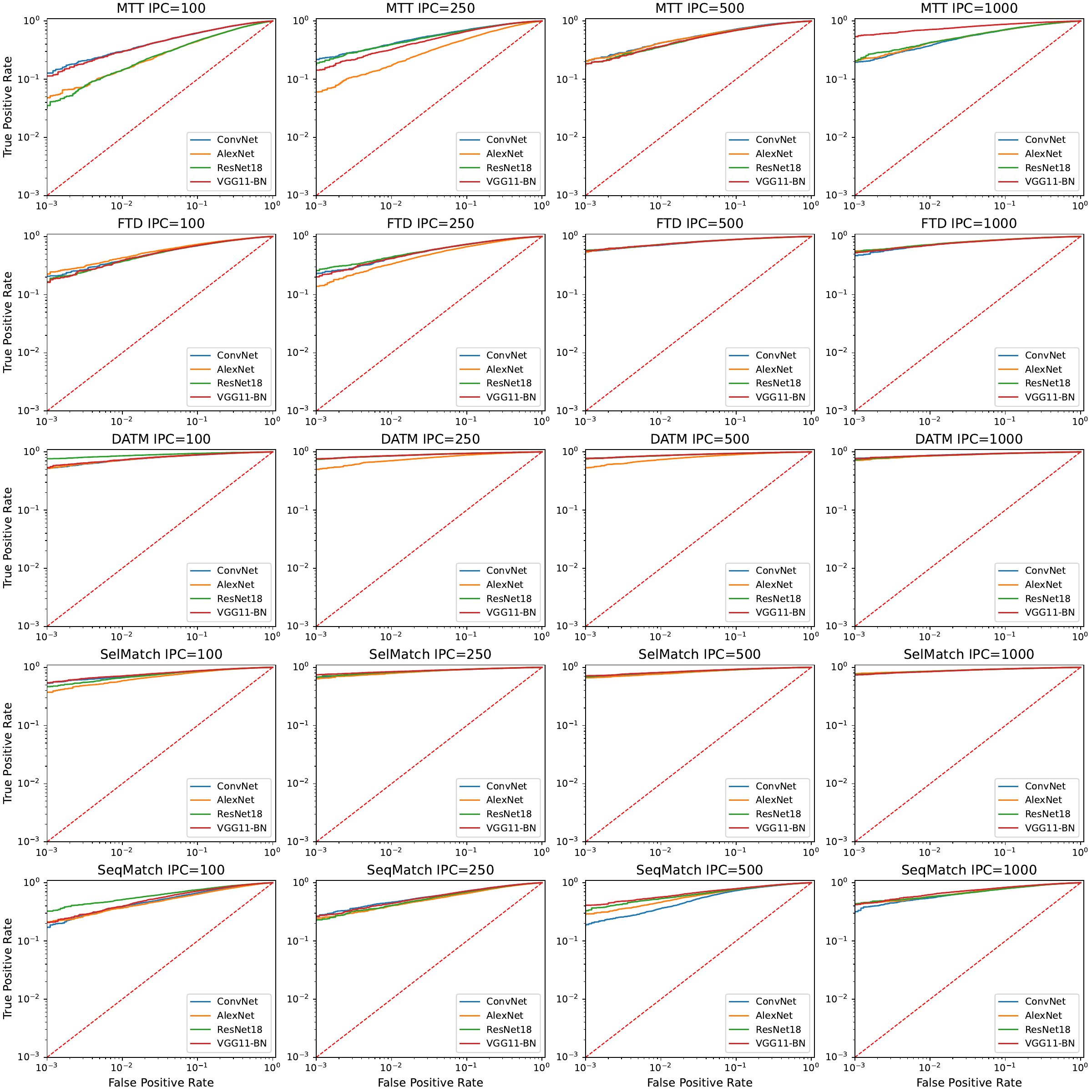}
    \caption{Log-scale ROC curves for the MIAs on the synthetic CIFAR-10 dataset.}
    \label{fig:atk_eval_cifar10}
\end{figure*}

\header{The MIV Attack.}
In \autoref{tab:eval_cifar10}, \autoref{tab:eval_cifar100}, and \autoref{tab:eval_tinyimagenet}, we list the quantitative results of the MIV attacks.
We observe that on the same real dataset, the attack accuracy tend to increase along with the ascending test accuracy of the local model, while the KNN distance decreases.
This indicates that a high-quality synthetic dataset is more likely to incur privacy leakage.
The different model architectures used in the same method have trivial impact on the MIV attack results.
This is because the MIV attack is launched on the local model with the consistent model architecture.
When the classification task becomes more complex, e.g., the number of classes increases from $10$ to $100$ in CIFAR-10 or $200$ in Tiny ImageNet, the attack performance decreases.
The degradation of the attack performance is also observed in \cite{liu2024unstoppable,Kahla2022LabelOnlyMIV}, where the increasing complexity and dimensionality of the dataset may lead to this phenomenon.

The qualitative experimental results are shown in \autoref{fig:miv_res_cifar10}, \autoref{fig:miv_res_cifar100}, and \autoref{fig:miv_res_tinyimagenet}.
These results show that the MIV model is able to generate high-quality samples by class.
The synthetic samples are realistic and can capture the subtle details and features.

\subsection{Ablation Study}
In this section, we investigate how the several key factors of the IRA attack affect the attack performance.

\header{Hidden Outputs used in the Input.}
To examine how the hidden outputs from the local model that are used in the MIA as the input affect the attack performance, we launch the MIA on the three datasets using the SelMatch dataset distillation algorithm and the ResNet18 model architecture with IPC$=100$.
The experimental results are listed in \autoref{tab:ablation_features}, where a significant drop of the attack performance can be observed when the hidden outputs are not included in the input of the attack model.
Apart from the information provided by the confidence vector, namely, the logits from the final layer, the hidden outputs give extra clues for inferring the membership.
This suggests that the features greatly contribute to the high attack performance, because the advantage of the white-box access to the local model is fully utilized.

\begin{table}[t!]
    \centering
    \renewcommand{\arraystretch}{1.}
    \setlength{\tabcolsep}{10pt}
    \caption{The impact of using the features as the input in the membership inference attack.}
    \begin{tabular}{c|ccccc}
    \toprule
      \multirow{2}{*}{Datasets} & \multicolumn{3}{c}{W/O Features as Inputs  } \\
        & BA & AUC & T@LF \% \\
    \midrule
    CIFAR-10  & 0.88/0.67 & 0.94/0.71 & 53.1/3.3  \\
    CIFAR-100  & 0.74/0.61 & 0.82/0.63 & 9.7/0.4 \\
    TinyImagenet-200 & 0.75/0.59 & 0.83/0.61 & 9.1/0.1 \\
    \bottomrule
    \end{tabular}
    \label{tab:ablation_features}
\end{table}

\header{The Loss Functions in the model inversion attack.}
Next, we investigate how the loss functions used in the model inversion attack affect the attack performance.
We consider four conditions, where the classification loss and trajectory loss are tested, respectively.
We then launch the attack on the local model that is trained on the synthetic CIFAR-10 dataset generated by the SelMatch algorithm with the ConvNet model architecture.
The experimental results can be found in \autoref{tab:ablation_loss_func}.
It is observed that without the two loss functions, the attack performance decreases greatly, because the MIV model learns only the distribution of the auxiliary dataset.
Compared to the trajectory loss, the classification loss has greater impact on the attack performance, while the trajectory loss also contributes to the enhancement of the attack.

\begin{table}[t!]
    \centering
    \renewcommand{\arraystretch}{1.}
    \setlength{\tabcolsep}{10pt}
    \caption{The impact of the loss functions used in the model inversion attack on the CIFAR-10 synthetic dataset.}
    \begin{tabular}{cc|cc}
    \toprule
      \multicolumn{2}{c|}{Loss Functions} & \multirow{2}{*}{Atk. Acc.} & \multirow{2}{*}{KNN Dist} \\
       $L_{\text{cls}}$ & $L_{\text{traj}}$ &  &  \\
    \midrule
    $\times$ & $\times$ & 0.47 & 1048.3 \\
    $\checkmark$ & $\times$ & 0.91 & 593.2 \\
    $\times$ & $\checkmark$ & 0.68 & 832.4 \\
    $\checkmark$ & $\checkmark$ & 0.94 & 489.1 \\
    \bottomrule
    \end{tabular}
    \label{tab:ablation_loss_func}
\end{table}

\section{Discussion}

Traditional dataset distillation \cite{Wang2018DatasetDistillation} or dataset condensation \cite{Zhao2021DatasetCondesation} may contribute to privacy-preserving \cite{Dong2022Privacy}, because the synthetic dataset does not contain that much sensitive information, or enable the model trained on it to share the similar performance.
However, this conclusion does not fit for the SOTA dataset distillation techniques anymore.
The quality of the synthetic dataset directly affects the attack performance of the IRA.
The direct cause of the privacy leakage may be that the SOTA dataset distillation algorithms attempt to save the weight trajectory into the synthetic dataset.
During the dataset distillation process, the victim model is trained to overfit the real dataset.
By training a local model locally using the synthetic dataset, the adversary makes the local model overfit the synthetic dataset.
Thus, the local model is likely to have similar weights to those of the victim model, which also overfits the real dataset.
This eventually leads to the success of the IRA.

As discussed in previous studies \cite{stadler2022synthetic,giomi2023unified}, synthetic datasets produced by dataset distillation are not perfect solutions to privacy issues.
If synthetic data preserves the characteristics needed for utility, it simultaneously enables privacy attacks. 
Conversely, when it provides meaningful privacy protection, it does so at significant utility costs, often by suppressing the very signals that make outlier analysis valuable.
Moreover, the privacy risk of synthetic data is not uniform across all records. 
Outliers, which are inherently more unique and easier to distinguish, are the ``Achilles' heels" of a dataset \cite{meeus2023achilles}.
Therefore, it necessitates further research in privacy-preserving synthetic dataset generation.

Potential defenses such as DP-SGD \cite{Abadi2016DPSGD,jordon2018pate} could be applied to the training process of the victim model during the dataset distillation.
By adding noises to the gradients during the training process, the weight trajectory of the victim model deviates from the original one, which is controlled by the privacy budget.
Similarly, the perturbation can also be directly applied to the synthetic samples using differential privacy \cite{zhang2024bounded}.
Furthermore, the current dataset distillation methods prefer soft labels, because soft labels help control the gradient flow.
It is thus possible to perturb the soft labels to reach the objective of privacy-preserving.
While providing protection to data privacy, this inevitably trades off the quality of the synthetic dataset with the effectiveness of privacy protection.
This suggests that it is likely impossible to generate a high-quality synthetic dataset without sacrificing the privacy.

\section{Conclusion}
We propose the information revelation attack on the current SOTA dataset distillation methods.
By fully exploiting the information within the synthetic dataset, the adversary is able to perform the membership inference, architecture inference, and model inversion attacks targeting on the real dataset and the dataset distillation algorithm.
Experimental results show that the IRA is sufficiently powerful to enable the adversary to reveal the sensitive information about the real dataset and the dataset distillation algorithm that can non-negligibly damage the interest of the owner of the dataset.
This warns the current studies related to dataset distillation that this technique per se has a great risk in privacy leakage.
Future works in this area should focus on the aspect where a privacy-preserving solution for dataset distillation should be devised.

\section*{Ethical Considerations}
In conducting this research, we strictly adhered to a "no-harm" principle. Our methodology was designed to reveal the vulnerability of current dataset distillation methods in a purely theoretical and isolated context, ensuring that no individuals, organizations, or operational systems were negatively impacted.

\header{Isolation from Operational Systems}
All experiments described in this paper were performed in a fully offline, local environment. We utilized open-source implementations of dataset distillation protocols and deployed them on our own hardware. At no point did our attack method interact with, query, or stress-test live commercial APIs or external web services. Consequently, this research caused no service disruption, latency, or financial cost to any model providers or platforms.

\header{Use of Public, Non-Sensitive Data}
The evaluation was conducted exclusively using standard, publicly available academic datasets. We did not harvest data from private users, nor did we target specific content creators. By relying solely on established benchmarks, we ensured that no personal data was processed, and no individual's privacy or intellectual property rights were infringed upon during the course of this study.

\header{Safe Security Assessment}
Our work treats the sensitive information of the synthetic datasets as a mathematical artifact to be analyzed, rather than attacking the infrastructure that hosts it. By confining our "red-teaming" efforts to a sandbox environment, we demonstrate that security vulnerabilities can be identified and documented without risk to the broader digital ecosystem or its stakeholders.

\section*{Open Science}
To promote transparency and reproducibility, we will release artifacts including all the code of all empirical studies employed in this paper. 
Our code will be released after acceptance.
All datasets we use are included in the artifact as they are public accessible.

\section*{Generative AI Usage}
In the preparation of this manuscript, we employed generative AI tools strictly for the purpose of linguistic refinement. Specifically, these tools were used to check for grammatical errors and to improve the flow and readability of the text.
We affirm that:
\begin{itemize}
    \item \textbf{No Content Generation:} The core concepts, experimental design, data analysis, and scientific conclusions are entirely the work of the authors.

    \item \textbf{No Synthetic Text:} No substantial portions of the text were generated by AI. The tools acted solely as a copy-editing assistant.

    \item \textbf{Human Oversight:} All suggestions made by the AI were manually reviewed and verified by the authors, who take full responsibility for the accuracy and integrity of the final publication.
\end{itemize}

\bibliographystyle{IEEEtran}
\bibliography{sample-base}


\clearpage
\appendix
\onecolumn

\section{Theoretical Analysis of the Architecture Inference Attack}\label{sect:proof}

We are going to discuss the relation between the dataset and the training loss. We hope that the following results will be helpful for understanding the efficacy of the proposed architecture inference attack that identifies the dataset distillation methods using the loss trajectories. See \cite{ji2018gradient,Lyu2020Gradient} for reference about theoretical analysis of training loss. 

Firstly, we list our assumptions and notations. Let $d$ be the dimension of the input vector and let $C$ be the number of classes. A dataset is a collection of several elements. An element in the dataset is a pair of an input sample vector $\bm{x}$ and the corresponding label $y\in\{1,2,\dots,C\}$. For a finite dimensional vector $\bm{a}$, $\|\bm{a}\|_2$ represents the $l_2$-norm of vector $\bm{a}$ and $B(\bm{a},\delta)$ represents the open ball with center $\bm{a}$ and radius $\delta$. 

\paragraph{Neural Network.} We consider the neural network as a vector-valued function
\begin{align*}
    \bm{f}: &\mathbb{R}^d\times \mathbb{R}^m \to \mathbb{R}^{C},
    (\bm{x},\bm{\theta})\mapsto (f_1(\bm{x},\bm{\theta}),\dots,f_{C}(\bm{x},\bm{\theta})),
\end{align*}
where $\bm{\theta}$ represents all the parameters in the neural network. 

When we train the neural network, we have fixed samples and consider $\bm{f}$ as a function of $\bm{\theta}$. Once we complete the training process, we consider $\bm{f}$ as a function of $\bm{x}$ and use it to classify the test input. However, we consider $f_i$ as a function on $\mathbb{R}^d\times \mathbb{R}^m$ in our following discussion for each $i \in \{1,2,\dots,C\}$. 

\paragraph{Loss function.} We define the loss function of $\bm{f}$ at a single sample point $(\bm{x},y)$ by the cross-entropy loss
\begin{align}
    L(\bm{x},\bm{\theta}) = -  \log \frac{e^{f_{y}(\bm{x},\bm{\theta})}}{\sum_{j = 1}^C e^{f_{j}(\bm{x},\bm{\theta})}}
\end{align}
and on a dataset $\mathcal{D}$ by the global cross-entropy loss
\begin{align}
    \mathcal{L}_{\mathcal{D}}(\bm{\theta}) := - \frac{1}{|\mathcal{D}|} \sum_{(\bm{x},y)\in Z} L(\bm{x},\bm{\theta}). 
\end{align}


\paragraph{Gradient.} The gradient of loss function $\mathcal{L}_{\mathcal{D}}$ is 
\begin{align*}
    \nabla \mathcal{L}_{\mathcal{D}}(\bm{\theta}) = \frac{1}{|\mathcal{D}|} \sum_{(\bm{x},y)\in \mathcal{D}} \left[\frac{\sum_{i = 1}^{C} \nabla f_{i}(\bm{x},\bm{\theta}) e^{f_{i}(\bm{x},\bm{\theta})}}{\sum_{i = 1}^{C} e^{f_{i}(\bm{x},\bm{\theta})}} - \nabla f_{y}(\bm{x},\bm{\theta})\right]. 
\end{align*}

\paragraph{Gradient Descent.} We assume that the loss function $\mathcal{L}_{Z}$ in the training process is a $C^1$-function and describe the gradient descent process as $\bm{\theta}(t+1) = \bm{\theta}(t)-\eta(t)\nabla \mathcal{L}_{Z}(\bm{\theta}(t))$, where $\eta(t)$ is the learning rate at time $t$.


We assume that $\mathcal{D}_1 = \{(\bm{x}_n, y_n)\}_{n=1}^{N}$ is a dataset with $N$ samples. Then, 
\begin{align*}
    \mathcal{L}_{\mathcal{D}_1}(\bm{\theta}) = - \frac{1}{N} \sum_{n = 1}^{N} \log \frac{e^{f_{y_n}(\bm{x}_n,\bm{\theta})}}{\sum_{i = 1}^{C} e^{f_{i}(\bm{x}_n,\bm{\theta})}}. 
\end{align*}
It is clear that 
\begin{align*}
    \nabla \mathcal{L}_{\mathcal{D}_1}(\bm{\theta})
    = \frac{1}{N} \sum_{n = 1}^{N} \left[\frac{\sum_{i = 1}^{C} \nabla f_{i}(\bm{x}_n,\bm{\theta}) e^{f_{i}(\bm{x}_n,\bm{\theta})}}{\sum_{i = 1}^{C} e^{f_{i}(\bm{x}_n,\bm{\theta})}} - \nabla f_{y_n}(\bm{x}_n,\bm{\theta})\right]. 
\end{align*}
We denote the initial parameters in neural network by $\bm{\theta}_0$ and describe the training process on $\mathcal{D}_1$ as 
\begin{align}\label{recursion1}
    &\bm{\theta}_1(0) = \bm{\theta}_0;\nonumber\\
    &\bm{\theta}_1(t+1) = \bm{\theta}_1(t) - \eta(t)\nabla \mathcal{L}_{\mathcal{D}_1}(\bm{\theta}_1(t)). 
\end{align}

In our following discussion, we assume the following for the neural network: 
\begin{enumerate}
    \item[(A1)] For each $i \in \{1,2,\dots,C\}$, $f_{i}$ is a $C^1$-function on $\mathbb{R}^d \times \mathbb{R}^m$; 
    \item[(A2)] For each $i \in \{1,2,\dots,C\}$, $\frac{\partial f_{i}}{\partial \theta_l}$ is locally Lipschitz on $\mathbb{R}^d \times \mathbb{R}^m$;
\end{enumerate}
We assume the following for the datasets: 
\begin{enumerate}
    \item[(A3)] There exists a constant $M_1>0$ such that $\|\bm{x}_n\|_2\leq M_1$ for all $n\in \{1,2,\dots,N\}$. 
\end{enumerate}
We assume the following for the training process on $\mathcal{D}_1$: 
\begin{enumerate} 
    \item[(A4)] There exist a suitable number of epochs $T_0\in \mathbb{N}_+$ and an appropriate constant learning rate $\eta(t) = \eta_0>0$, such that the accuracy of the neural network after $T_0$ epochs is high enough. 
    \item[(A5)] There exists a constant $M_2>0$ such that $\|\bm{\theta}_1(t)\|_2\leq M_2$ for all $t\in \{1,2,\dots,T_0\}$. 
\end{enumerate}

In most concrete missions, the collection of input vectors and the collection of parameters are bounded sets. So, it seems to be fine to set (A3) and (A5).


Let $D = \{(\bm{x}, \bm{\theta}):\|\bm{x}\|_2\leq M_1 + 1; \|\bm{\theta}\|_2\leq M_2 + 1\}$. Then $D$ is a closed and bounded subset of $\mathbb{R}^d\times \mathbb{R}^m$. Therefore, $D$ is compact on $\mathbb{R}^d\times \mathbb{R}^m$.

\begin{lem}\label{0order}
    For any $(\bm{x}', \bm{\theta}') \neq (\bm{x}'', \bm{\theta}'')\in D$, there exists $L_1> 0$, such that 
    \begin{align*}
        \left|f_i(\bm{x}', \bm{\theta}') - f_i(\bm{x}'', \bm{\theta}'')\right| \leq L_1 (\|\bm{x}'-\bm{x}''\|_2 + \|\bm{\theta}'-\bm{\theta}''\|_2),
    \end{align*}
    for each $i \in \{1,2,\dots,C\}$. 
\end{lem}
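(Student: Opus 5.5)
The plan is to obtain the uniform Lipschitz constant $L_1$ from the $C^1$ regularity in (A1) together with the compactness of $D$. The quantifiers in the statement are loosely ordered. We read the claim as asking for one constant $L_1>0$, independent of the two points and of $i$, and this is what we prove; the pointwise version follows trivially.

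First, the set
\[
D=\{(\bm{x},\bm{\theta}):\|\bm{x}\|_2\le M_1+1,\ \|\bm{\theta}\|_2\le M_2+1\}
\]
is a product of two closed Euclidean balls, so it is \emph{convex}. This is the key structural fact: for any two points $(\bm{x}',\bm{\theta}'),(\bm{x}'',\bm{\theta}'')\in D$, the whole segment
\[
\gamma(s)=(\bm{x}'',\bm{\theta}'')+s\bigl((\bm{x}',\bm{\theta}')-(\bm{x}'',\bm{\theta}'')\bigr),\quad s\in[0,1],
\]
stays in $D$.

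Second, by (A1) the full gradient $\nabla_{(\bm{x},\bm{\theta})} f_i$ is continuous on $\mathbb{R}^d\times\mathbb{R}^m$. Hence its Euclidean norm attains a finite maximum on the compact set $D$. Set
\[
L_1:=\max_{1\le i\le C}\ \max_{(\bm{x},\bm{\theta})\in D}\bigl\|\nabla_{(\bm{x},\bm{\theta})} f_i(\bm{x},\bm{\theta})\bigr\|_2 .
\]
If this maximum happens to be $0$, replace it by $1$ so that $L_1>0$.

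Third, apply the mean value theorem to the scalar $C^1$ function $g(s)=f_i(\gamma(s))$ on $[0,1]$. This gives
\[
|f_i(\bm{x}',\bm{\theta}')-f_i(\bm{x}'',\bm{\theta}'')|=|g'(s^*)|
\]
for some $s^*\in(0,1)$. By the chain rule and Cauchy--Schwarz,
\[
|g'(s^*)|\le \|\nabla f_i(\gamma(s^*))\|_2\,\bigl\|(\bm{x}'-\bm{x}'',\bm{\theta}'-\bm{\theta}'')\bigr\|_2 .
\]
Since $\gamma(s^*)\in D$, the first factor is at most $L_1$. The second factor is the Euclidean norm of the concatenated vector, and it satisfies
\[
\bigl\|(\bm{x}'-\bm{x}'',\bm{\theta}'-\bm{\theta}'')\bigr\|_2=\sqrt{\|\bm{x}'-\bm{x}''\|_2^2+\|\bm{\theta}'-\bm{\theta}''\|_2^2}\le \|\bm{x}'-\bm{x}''\|_2+\|\bm{\theta}'-\bm{\theta}''\|_2 .
\]
Combining these bounds yields the claimed inequality for every $i$, with the same $L_1$ throughout.

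There is no serious obstacle. The only point needing care is that the segment between the two points remains inside $D$, so that the gradient bound applies along the whole path; this is exactly what convexity of $D$ guarantees. If one preferred to avoid convexity, the alternative would be a covering argument using local Lipschitzness on the compact set, which is messier, so the convex mean value route is the one to use.
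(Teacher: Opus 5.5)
Your argument is correct and is essentially the paper's: both bound the derivatives of the $C^1$ function $f_i$ on the compact set $D$ and then apply a mean value theorem (the paper cites one from an analysis text). You additionally make explicit the convexity of $D$ that this step needs, and your constant $L_1=\max_i\max_{D}\|\nabla_{(\bm{x},\bm{\theta})}f_i\|_2$ is the right one. By contrast, the paper's constant, the maximum of the individual partial derivatives for each fixed $i$, only gives the inequality after an extra dimension-dependent factor such as $\sqrt{d+m}$, so your version also repairs a small slip in the paper.
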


\begin{proof}
    For any fixed $i \in \{1,2,\dots,C\}$, $f_{i}$ is a $C^1$-function on $\mathbb{R}^d \times \mathbb{R}^m$ according to (A1). 

    It is easy to verify that $\frac{\partial f_{i}}{\partial x_j}$ and $\frac{\partial f_{i}}{\partial \theta_k}$ are bounded on $D$ for any $j\in\{1,2,\dots, d\}, k\in \{1,2,\dots,m\}$. Then, we have
    \begin{align*}
        |f_i(\bm{x}', \bm{\theta}') - f_i(\bm{x}'', \bm{\theta}'')| \leq  L_1 (\|\bm{x}'-\bm{x}''\|_2 + \|\bm{\theta}'-\bm{\theta}''\|_2),
    \end{align*}
    by \cite[Theorem VII.3.9]{zbMATH05044591}, where 
    \begin{align*}
        L_1 = \max_{j,k}\left\{\max_{D} \left|\frac{\partial f_{i}}{\partial x_j}(\bm{x}, \bm{\theta})\right|, \max_{D}\left|\frac{\partial f_{i}}{\partial \theta_k}(\bm{x}, \bm{\theta})\right|\right\}. 
    \end{align*}
\end{proof}

\begin{lem}\label{1order}
    For any $(\bm{x}', \bm{\theta}') \neq (\bm{x}'', \bm{\theta}'')\in D$, there exists $L_2 > 0$, such that 
    \begin{align*}
        \left|\frac{\partial f_{i}}{\partial \theta_l}(\bm{x}', \bm{\theta}') - \frac{\partial f_{i}}{\partial \theta_l}(\bm{x}'', \bm{\theta}'')\right| \leq L_2 (\|\bm{x}'-\bm{x}''\|_2 + \|\bm{\theta}'-\bm{\theta}''\|_2),
    \end{align*}
    for each $i \in \{1,2,\dots,C\}, l\in \{1,2,\dots,m\}$. 
\end{lem}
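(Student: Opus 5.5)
The statement to prove is Lemma~\ref{1order}: the partial derivatives $\partial f_i/\partial\theta_l$ are globally Lipschitz on the compact set $D$. The plan is to upgrade the local Lipschitz property in (A2) to a global Lipschitz bound on $D$ by a standard compactness argument. The constant will be taken uniformly over the finitely many indices $i\in\{1,\dots,C\}$ and $l\in\{1,\dots,m\}$.

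Fix $i,l$ and write $g=\partial f_i/\partial\theta_l$. The argument has four steps.

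\begin{enumerate}
\item \emph{Local constants.} By (A2), for every point $p\in D$ there are a radius $r_p>0$ and a constant $K_p$ such that $g$ is $K_p$-Lipschitz on $B(p,2r_p)$. Here we use the norm $\|\bm x\|_2+\|\bm\theta\|_2$, which is equivalent to the Euclidean norm on $\mathbb{R}^d\times\mathbb{R}^m$.

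\item \emph{Finite subcover.} The balls $B(p,r_p)$ cover $D$. By compactness, finitely many of them, centred at $p_1,\dots,p_k$, already cover $D$. Set $K=\max_j K_{p_j}$ and $r=\min_j r_{p_j}>0$.

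\item \emph{Boundedness of $g$.} Since $f_i$ is $C^1$ by (A1), $g$ is continuous on $D$. Hence $g$ is bounded on the compact set $D$, say $|g|\le G$.

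\item \emph{Two cases.} Take distinct points $u',u''\in D$ with $\rho=\|\bm x'-\bm x''\|_2+\|\bm\theta'-\bm\theta''\|_2$.
  \begin{itemize}
  \item If $\rho<r$: the point $u'$ lies in some $B(p_j,r_{p_j})$, and then $u''\in B(p_j,2r_{p_j})$. So $|g(u')-g(u'')|\le K\rho$.
  \item If $\rho\ge r$: we have $|g(u')-g(u'')|\le 2G\le (2G/r)\rho$.
  \end{itemize}
\end{enumerate}

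Taking $L_2=\max\{K,2G/r\}$, and then the maximum of this constant over all $(i,l)$, gives the claim. To make $L_2>0$ strictly, replace it by $\max\{L_2,1\}$ if necessary.

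The only delicate point is the first case of step 4: a pair of nearby points need not lie in a common neighbourhood from the cover. The doubling of the radii handles this, because both points then fall in a single neighbourhood where the local Lipschitz bound applies. A Lebesgue-number argument would work equally well. Everything else is routine.
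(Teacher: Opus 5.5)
Your proof is correct and uses the same compactness skeleton as the paper: cover $D$ by balls on which $\partial f_i/\partial\theta_l$ is Lipschitz, extract a finite subcover, and take the largest local constant.

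The difference is the last step. The paper sets $L_2=\max_n Lip_{\bm{z}_n}$ and writes ``therefore''. It never explains why two points of $D$ that do not lie in a common ball of the subcover satisfy the bound. Your argument handles this explicitly. The radii $r_p$, taken inside Lipschitz balls of radius $2r_p$, settle nearby pairs, and the bound $2G/r$ settles distant ones. So your constant $\max\{K,2G/r\}$ is fully justified on an arbitrary compact set.

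The paper's sharper constant $\max_n Lip_{\bm{z}_n}$ can also be justified, but only by a step the paper leaves unstated. Since $D$ is convex (a product of two closed balls), the segment from $(\bm{x}',\bm{\theta}')$ to $(\bm{x}'',\bm{\theta}'')$ stays in $D$. Subdivide it into pieces shorter than a Lebesgue number of the finite cover restricted to the segment. Then telescope, using that norm-lengths add along a segment; this gives the bound with $\max_n Lip_{\bm{z}_n}$.

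So your route needs no convexity and is complete as written, at the cost of a possibly larger constant. The paper's route gives the smaller constant but depends on the unstated convexity step.

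You also rightly produce a single $L_2$, uniform over all pairs and all $(i,l)$. That uniform constant is what the later gradient estimate actually uses, even though the lemma's quantifier order literally allows $L_2$ to depend on the pair.
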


\begin{proof}
    For any fixed $i \in \{1,2,\dots,C\}, l\in \{1,2,\dots,m\}$, $\frac{\partial f_{i}}{\partial \theta_l}$ is locally Lipschitz on $\mathbb{R}^d \times \mathbb{R}^m$ according to (A2). 

    For any $(\bm{x}, \bm{\theta})\in D$, there exists $\delta _{(\bm{x}, \bm{\theta})}$ such that $\frac{\partial f_{i}}{\partial \theta_l}$ is Lipschitz on $B((\bm{x}, \bm{\theta}); \delta _{(\bm{x}, \bm{\theta})})$ with Lipschitz constant $Lip_{(\bm{x}, \bm{\theta})}$. It is clear that $\{B((\bm{x}, \bm{\theta}); \delta _{(\bm{x}, \bm{\theta})}): (\bm{x}, \bm{\theta})\in D\}$ is an open cover of $D$. Hence, there exists a finite subcover $\{B(\bm{z}_n; \delta _{\bm{z}_n}): n = 1,2,\dots, n_0\}$. Therefore, we have 
    \begin{align*}
        \left|\frac{\partial f_{i}}{\partial \theta_l}(\bm{x}', \bm{\theta}') - \frac{\partial f_{i}}{\partial \theta_l}(\bm{x}'', \bm{\theta}'')\right|
        \leq  L_2 (\|\bm{x}'-\bm{x}''\|_2 + \|\bm{\theta}'-\bm{\theta}''\|_2),
    \end{align*}
    where $L_2 = \max_{n = 1,2,\dots, n_0} Lip_{\bm{z}_n}$. 
\end{proof}

We would like to verify that the training loss on $\mathcal{D}_2$ will be very close to $\mathcal{D}_1$ if $\mathcal{D}_2$ is close enough to $\mathcal{D}_1$. To apply the mathematical tools, we consider $\mathcal{D}_2$ as a perturbation of $\mathcal{D}_1$. Before considering the training loss, we analyze the change of parameters that determine the training loss.

\begin{thm}\label{para}
    Assume that $\mathcal{D}_2$ is a perturbation of $\mathcal{D}_1$, i.e. 
    \begin{align*}
        \mathcal{D}_2 = \{(\bm{x}_{n} + \bm{h}_{n}, y_n): \|\bm{h}_{n}\|_2 < \delta, n = 1,2,\dots,N\}, 
    \end{align*}
    where $\delta > 0$ is a constant. Describe the training process on $\mathcal{D}_2$ as
    \begin{align}\label{recursion2}
        &\bm{\theta}_{2} (0) = \bm{\theta}_0 , \nonumber\\
        &\bm{\theta}_{2} (t+1) = \bm{\theta}_{2} (t)-\eta(t)\nabla \mathcal{L}_{\mathcal{D}_2}(\bm{\theta} _{2}(t)). 
    \end{align}
        
    For any $\varepsilon > 0$, we have  
    \begin{align}\label{paracontrol}
        \|\bm{\theta}_{1}(t) - \bm{\theta}_{2}(t)\| < \varepsilon ,\ t\in \{1,2,\dots,T_0\}, 
    \end{align}
    when $\delta $ is small enough. 
\end{thm}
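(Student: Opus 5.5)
We argue by induction on $t$ with a discrete Gr\"onwall-type estimate, using Lemma~\ref{0order} and Lemma~\ref{1order} on the compact set $D$. The key intermediate claim is that the gradient map $(\bm{x},\bm{\theta})\mapsto \nabla_{\bm{\theta}} L(\bm{x},\bm{\theta})$ is Lipschitz on $D$ with some constant $K>0$. Writing $\nabla_{\bm\theta} L = \sum_i p_i(\bm{x},\bm{\theta})\nabla f_i(\bm{x},\bm{\theta}) - \nabla f_{y}(\bm{x},\bm{\theta})$ with softmax weights $p_i\in(0,1)$, this follows from three facts. First, the $p_i$ are Lipschitz in $(f_1,\dots,f_C)$, hence on $D$ by Lemma~\ref{0order}. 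Second, the $\nabla f_i$ are Lipschitz on $D$ by Lemma~\ref{1order}, applied componentwise, which costs a factor $\sqrt m$. Third, all these quantities are bounded on $D$ by continuity and compactness. A product of bounded Lipschitz functions is Lipschitz, which gives $K$. Averaging over the $N$ samples then yields, for any $\bm\theta',\bm\theta''$ with $\|\bm\theta'\|_2,\|\bm\theta''\|_2\le M_2+1$ and $\|\bm h_n\|_2<\delta\le 1$,
\begin{align*}
\|\nabla\mathcal{L}_{\mathcal{D}_1}(\bm\theta')-\nabla\mathcal{L}_{\mathcal{D}_2}(\bm\theta'')\|_2 \le K(\delta + \|\bm\theta'-\bm\theta''\|_2).
\end{align*}
Here the label $y_n$ is the same in both datasets, so the $\nabla f_{y_n}$ terms pair up correctly, and (A3) places both $\bm x_n$ and $\bm x_n+\bm h_n$ in $D$.

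Next, set $e_t=\|\bm\theta_1(t)-\bm\theta_2(t)\|_2$ and use the constant step $\eta_0$ from (A4). Subtracting the recursions (\ref{recursion1}) and (\ref{recursion2}) and applying the bound above gives
\begin{align*}
e_{t+1}\le (1+\eta_0K)e_t+\eta_0K\delta, \qquad e_0=0.
\end{align*}
Unrolling this recursion yields $e_t\le \delta\big((1+\eta_0K)^t-1\big)\le \delta\big((1+\eta_0K)^{T_0}-1\big)=:\delta\, c_{T_0}$. We then choose $\delta<\min\{1,\varepsilon/c_{T_0}\}$, and since $\varepsilon$ may be assumed less than $1$, this gives (\ref{paracontrol}).

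The main obstacle is that the Lipschitz bound is only valid inside $D$. We therefore need to know that $\|\bm\theta_2(t)\|_2\le M_2+1$ before we may use it at step $t$. We handle this inside the induction. The hypothesis is $e_s<\min\{\varepsilon,1\}$ for all $s\le t$. Combined with (A5), it gives $\|\bm\theta_2(t)\|_2\le \|\bm\theta_1(t)\|_2+e_t< M_2+1$, so both iterates lie in the $\bm\theta$-slice of $D$. This justifies the estimate at step $t$ and produces $e_{t+1}\le \delta c_{T_0}<\min\{\varepsilon,1\}$, which closes the induction. For $t=0$ we need $\|\bm\theta_0\|_2\le M_2$, which we assume, consistent with (A5) extended to $t=0$. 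The remaining details, namely the explicit softmax Lipschitz constant and the norm bookkeeping, are routine.
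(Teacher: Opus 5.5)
Your proposal is correct and follows essentially the same route as the paper. Both arguments bound the gradient difference by $K(\delta+e_t)$ using Lemmas~\ref{0order} and~\ref{1order}; the paper makes $K=\sqrt{m}(4L_1^2+2L_2)$ explicit through a direct softmax decomposition instead of your ``product of bounded Lipschitz functions'' argument. Both then use the same discrete Gr\"onwall recursion $e_{t+1}+\delta\le(1+\eta_0K)(e_t+\delta)$ and the same choice of $\delta$. Your induction keeping $\bm x_n+\bm h_n$ and $\bm\theta_2(t)$ inside $D$ (via $\delta\le 1$ and $e_t<1$) covers a point the paper uses implicitly without checking, so your version is slightly more careful.
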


Then, we immediately deduce the following result about the training loss by the local continuity of the loss function at each single sample point. 

\begin{cor}\label{loss}
    For any $\varepsilon > 0$, we have 
    \begin{align}\label{losscontrol}
        |\mathcal{L}_{\mathcal{D}_1}(\bm{\theta}_{1}(t)) - \mathcal{L}_{\mathcal{D}_2}(\bm{\theta}_{2}(t))| < \varepsilon ,t\in \{0,1,\dots,T_0\}, 
    \end{align}
    when $\delta $ is small enough. 
\end{cor}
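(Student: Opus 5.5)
Corollary~\ref{loss} follows from Theorem~\ref{para} together with the Lipschitz bound of Lemma~\ref{0order} and the fact that log-sum-exp is $1$-Lipschitz. The plan is to fix $\varepsilon>0$ and choose $\delta$ so that both $\|\bm{h}_n\|_2$ and $\|\bm{\theta}_1(t)-\bm{\theta}_2(t)\|_2$ are small and all relevant points stay in the compact set $D$.

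\textbf{Keeping the points in $D$.} Take $\delta<1$, so that $\|\bm{x}_n+\bm{h}_n\|_2\le M_1+1$ by (A3). Apply Theorem~\ref{para} with tolerance $\varepsilon'=\min\{1,\varepsilon/(2L_1)\}$. For $\delta$ small enough this gives $\|\bm{\theta}_1(t)-\bm{\theta}_2(t)\|_2<\varepsilon'$ for $t\le T_0$. With (A5) this yields $\|\bm{\theta}_2(t)\|_2\le M_2+1$, so both $(\bm{x}_n,\bm{\theta}_1(t))$ and $(\bm{x}_n+\bm{h}_n,\bm{\theta}_2(t))$ lie in $D$. At $t=0$ the two parameter vectors coincide, and $\|\bm{\theta}_0\|_2\le M_2$ is assumed to be covered by (A5); if not, use the bound $\max(M_2,\|\bm{\theta}_0\|_2)$.

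\textbf{Per-sample bound.} Write the single-sample loss as $L=\mathrm{LSE}(f_1,\dots,f_C)-f_{y}$, where $\mathrm{LSE}$ is log-sum-exp. Since $\mathrm{LSE}$ is $1$-Lipschitz with respect to the $\ell_\infty$ norm, the difference of the per-sample losses at the two points is at most $2\max_i|f_i(\bm{x}_n,\bm{\theta}_1(t))-f_i(\bm{x}_n+\bm{h}_n,\bm{\theta}_2(t))|$. By Lemma~\ref{0order} this is at most $2L_1(\delta+\varepsilon')$.

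\textbf{Averaging.} The $n$-th samples of $\mathcal{D}_1$ and $\mathcal{D}_2$ share the label $y_n$, and $|\mathcal{D}_1|=|\mathcal{D}_2|=N$. Averaging the per-sample bound over $n$ (the paper's sign convention in $\mathcal{L}_{\mathcal{D}}$ does not affect absolute differences) gives
\[
|\mathcal{L}_{\mathcal{D}_1}(\bm{\theta}_1(t))-\mathcal{L}_{\mathcal{D}_2}(\bm{\theta}_2(t))|\le 2L_1(\delta+\varepsilon').
\]
Choosing $\varepsilon'$ and $\delta$ both below $\varepsilon/(4L_1)$ makes this smaller than $\varepsilon$.

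The only delicate step is ensuring the perturbed trajectory $\bm{\theta}_2(t)$ stays in $D$, so that the uniform constant $L_1$ applies. Capping $\varepsilon'\le 1$ in the call to Theorem~\ref{para} handles this. Everything else is routine.
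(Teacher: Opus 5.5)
Your proof is correct, but the step that converts the parameter gap into a loss gap differs from the paper's.

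\textbf{The paper's route.} It uses no Lipschitz property of the loss at this step. It only uses that the single-sample loss $L$ is continuous at each of the finitely many points $(\bm{x}_n,\bm{\theta}_1(t))$, which gives radii $\iota_{n,t}(\varepsilon)>0$. It then uses the explicit estimate \eqref{paraestimate} from inside the proof of Theorem~\ref{para} to push $(\bm{x}_n+\bm{h}_n,\bm{\theta}_2(t))$ into those balls, by taking $\delta$ below a minimum over $n\le N$, $t\le T_0$.

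\textbf{Your route.} You derive a uniform Lipschitz bound for the per-sample loss on $D$, from Lemma~\ref{0order} and the fact that log-sum-exp is $1$-Lipschitz in the $\ell_\infty$ norm. You use Theorem~\ref{para} only through its statement.

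\textbf{What each buys.} Yours gives an explicit rate. Combined with \eqref{paraestimate}, for small $\delta$ it yields $|\mathcal{L}_{\mathcal{D}_1}(\bm{\theta}_1(t))-\mathcal{L}_{\mathcal{D}_2}(\bm{\theta}_2(t))|\le 2L_1[1+\eta_0\sqrt{m}(4L_1^2+2L_2)]^{t}\delta$, which is linear in $\delta$ and uniform over samples. The paper's radii $\iota_{n,t}(\varepsilon)$ are non-explicit. The price is that you must keep the perturbed points inside the compact set $D$. The continuity argument does not need this at the corollary step, although \eqref{paraestimate} itself silently relies on it. You handle this correctly, including the $t=0$ point that (A5) does not cover.

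\textbf{One bookkeeping slip.} With your initial choice $\varepsilon'=\min\{1,\varepsilon/(2L_1)\}$, the final bound is $2L_1\delta+\varepsilon$, which is not below $\varepsilon$. Take $\varepsilon'=\min\{1,\varepsilon/(4L_1)\}$ from the outset, as your last step effectively does.
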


\begin{proof}[Proof of Theorem \autoref{para}]
    By \eqref{recursion1} and \eqref{recursion2}, we have the following estimate 
    \begin{align}\label{recursion3}
        \|&\bm{\theta}_{1}(t+1) - \bm{\theta}_{2}(t+1)\|_{2}\leq \|\bm{\theta}_{1} (t) - \bm{\theta}_{2} (t)\|_{2} \nonumber\\
        &+ \eta(t)\|\nabla \mathcal{L}_{\mathcal{D}_1}(\bm{\theta} _{1}(t)) - \nabla \mathcal{L}_{\mathcal{D}_2}(\bm{\theta} _{2}(t))\|_{2}, 
    \end{align}
    where $t\in \{0,1,\dots,T_0 - 1\}$. 

    According to Lemma \autoref{0order}, we have 
    \begin{align}\label{ycz2}
        \sum_{i = 1}^{C} &\left| \frac{\partial f_{i}}{\partial \theta_l} (\bm{x}_n,\bm{\theta} _{1}(t)) \right| \left|\frac{e^{f_{i}(\bm{x}_n,\bm{\theta} _{1}(t))}}{\sum_{j = 1}^{C} e^{f_{j}(\bm{x}_n,\bm{\theta} _{1}(t))}} - \frac{e^{f_{i}(\bm{x}_n + \bm{h}_n,\bm{\theta} _{2}(t))}}{\sum_{j = 1}^{C} e^{f_{j}(\bm{x}_n + \bm{h}_n,\bm{\theta} _{2}(t))}} \right|\nonumber\\
        \leq &\sum_{i = 1}^{C} \left| \frac{\partial f_{i}}{\partial \theta_l} (\bm{x}_n,\bm{\theta} _{1}(t)) \right| \frac{e^{f_{i}(\bm{x}_n,\bm{\theta} _{1}(t))}}{\sum_{j = 1}^{C} e^{f_{j}(\bm{x}_n,\bm{\theta} _{1}(t))}}\sum_{k = 1}^{C} \frac{\left|e^{f_{k}(\bm{x}_n,\bm{\theta} _{1}(t)) - f_{k}(\bm{x}_n + \bm{h}_n,\bm{\theta} _{2}(t))} - 1\right|e^{f_{k}(\bm{x}_n + \bm{h}_n,\bm{\theta} _{2}(t))}}{\sum_{j = 1}^{C} e^{f_{j}(\bm{x}_n + \bm{h}_n,\bm{\theta} _{2}(t))}}\nonumber\\
        &+ \sum_{i = 1}^{C} \left| \frac{\partial f_{i}}{\partial \theta_l} (\bm{x}_n,\bm{\theta} _{1}(t)) \right| \frac{\left| e^{f_{i}(\bm{x}_n,\bm{\theta} _{1}(t)) - f_{i}(\bm{x}_n + \bm{h}_n,\bm{\theta} _{2}(t))} - 1\right|e^{f_{i}(\bm{x}_n + \bm{h}_n,\bm{\theta} _{2}(t))}}{\sum_{j = 1}^{C} e^{f_{j}(\bm{x}_n + \bm{h}_n,\bm{\theta} _{2}(t))}}
        \leq  4 L_1^2 (\|\bm{h}_n\|_{2} + \|\bm{\theta}_{1} (t) - \bm{\theta}_{2} (t)\|_{2} ),
    \end{align}
    for each $l = \{1,2,\dots,m\}$ and $n \in \{1,2,\dots,N\}$. 
    
    According to Lemma \autoref{1order}, we have 
    \begin{align}\label{ycz4}
        \left|\frac{\partial f_{y_n}}{\partial \theta_l} (\bm{x}_n,\bm{\theta} _{1}(t)) - \frac{\partial f_{y_n}}{\partial \theta_l} (\bm{x}_n + \bm{h}_n,\bm{\theta} _{2}(t))\right| 
        \leq L_2 (\|\bm{h}_n\|_{2} + \|\bm{\theta}_{1} (t) - \bm{\theta}_{2} (t)\|_{2} ),
    \end{align}
    and
    \begin{align}\label{ycz1}
        \sum_{i = 1}^{C} \frac{\left|\frac{\partial f_{i}}{\partial \theta_l} (\bm{x}_n,\bm{\theta} _{1}(t)) - \frac{\partial f_{i}}{\partial \theta_l} (\bm{x}_n + \bm{h}_n,\bm{\theta} _{2}(t)) \right|e^{f_{i}(\bm{x}_n + \bm{h}_n,\bm{\theta} _{2}(t))}}{\sum_{j = 1}^{C} e^{f_{j}(\bm{x}_n + \bm{h}_n,\bm{\theta} _{2}(t))}}
        \leq L_2 (\|\bm{h}_n\|_{2} + \|\bm{\theta}_{1} (t) - \bm{\theta}_{2} (t)\|_{2} ),
    \end{align}
    for each $l = \{1,2,\dots,m\}$ and $n \in \{1,2,\dots,N\}$. 
    
    It follows from \eqref{ycz2} and \eqref{ycz1} that
    \begin{align}\label{ycz3}
        \left|\sum_{i = 1}^{C} \frac{ \frac{\partial f_{i}}{\partial \theta_l} (\bm{x}_n,\bm{\theta} _{1}(t)) e^{f_{i}(\bm{x}_n,\bm{\theta} _{1}(t))}}{\sum_{j = 1}^{C} e^{f_{j}(\bm{x}_n,\bm{\theta} _{1}(t))}} - \sum_{i = 1}^{C} \frac{\frac{\partial f_{i}}{\partial \theta_l} (\bm{x}_n + \bm{h}_n,\bm{\theta} _{2}(t)) e^{f_{i}(\bm{x}_n + \bm{h}_n,\bm{\theta} _{2}(t))}}{\sum_{j = 1}^{C} e^{f_{j}(\bm{x}_n + \bm{h}_n,\bm{\theta} _{2}(t))}}\right|\nonumber\\
        \leq \sum_{i = 1}^{C} \left|\frac{\partial f_{i}}{\partial \theta_l} (\bm{x}_n,\bm{\theta} _{1}(t)) - \frac{\partial f_{i}}{\partial \theta_l} (\bm{x}_n + \bm{h}_n,\bm{\theta} _{2}(t)) \right|\frac{e^{f_{i}(\bm{x}_n + \bm{h}_n,\bm{\theta} _{2}(t))}}{\sum_{j = 1}^{C} e^{f_{j}(\bm{x}_n + \bm{h}_n,\bm{\theta} _{2}(t))}}\nonumber\\
        + \sum_{i = 1}^{C} \left| \frac{\partial f_{i}}{\partial \theta_l} (\bm{x}_n,\bm{\theta} _{1}(t)) \right| \left|\frac{e^{f_{i}(\bm{x}_n,\bm{\theta} _{1}(t))}}{\sum_{j = 1}^{C} e^{f_{j}(\bm{x}_n,\bm{\theta} _{1}(t))}} - \frac{e^{f_{i}(\bm{x}_n + \bm{h}_n,\bm{\theta} _{2}(t))}}{\sum_{j = 1}^{C} e^{f_{j}(\bm{x}_n + \bm{h}_n,\bm{\theta} _{2}(t))}} \right|\nonumber\\
        \leq L_2 (\|\bm{h}_n\|_{2} + \|\bm{\theta}_{1} (t) - \bm{\theta}_{2} (t)\|_{2} ) + 4 L_1^2 (\|\bm{h}_n\|_{2} + \|\bm{\theta}_{1} (t) - \bm{\theta}_{2} (t)\|_{2} ), 
    \end{align}
    for each $l = \{1,2,\dots,m\}$ and $n \in \{1,2,\dots,N\}$.

    It follows from \eqref{ycz4} and \eqref{ycz3}
    \begin{align}
        \left| \frac{\partial \mathcal{L}_{\mathcal{D}_1}}{\partial \theta_l} (\bm{\theta} _{1}(t)) - \frac{\partial \mathcal{L}_{\mathcal{D}_2}}{\partial \theta_l} (\bm{\theta} _{2}(t))\right|
        \leq \frac{1}{N} \sum_{n = 1}^{N} \left|\frac{\partial f_{y_n}}{\partial \theta_l} (\bm{x}_n,\bm{\theta} _{1}(t)) - \frac{\partial f_{y_n}}{\partial \theta_l} (\bm{x}_n + \bm{h}_n,\bm{\theta} _{2}(t))\right|\nonumber\\
        + \frac{1}{N} \sum_{n = 1}^{N} \left|\sum_{i = 1}^{C} \frac{ \frac{\partial f_{i}}{\partial \theta_l} (\bm{x}_n,\bm{\theta} _{1}(t)) e^{f_{i}(\bm{x}_n,\bm{\theta} _{1}(t))}}{\sum_{j = 1}^{C} e^{f_{j}(\bm{x}_n,\bm{\theta} _{1}(t))}} - \sum_{i = 1}^{C} \frac{\frac{\partial f_{i}}{\partial \theta_l} (\bm{x}_n + \bm{h}_n,\bm{\theta} _{2}(t)) e^{f_{i}(\bm{x}_n + \bm{h}_n,\bm{\theta} _{2}(t))}}{\sum_{j = 1}^{C} e^{f_{j}(\bm{x}_n + \bm{h}_n,\bm{\theta} _{2}(t))}}\right|\nonumber\\
        \leq (4 L_1^2 + 2L_2) (\delta + \|\bm{\theta}_{1} (t) - \bm{\theta}_{2} (t)\|_{2} ).  
    \end{align}
    Furthermore, we have
    \begin{align}\label{gradestimate}
        &\|\nabla \mathcal{L}_{\mathcal{D}_1}(\bm{\theta} _{1}(t)) - \nabla \mathcal{L}_{\mathcal{D}_2}(\bm{\theta} _{2}(t))\|_{2} \leq \sqrt{m}(4 L_1^2 + 2L_2) (\delta + \|\bm{\theta}_{1} (t) - \bm{\theta}_{2} (t)\|_{2} ). 
    \end{align}
    
    Therefore, we conclude that 
    \begin{align*}
        \|\bm{\theta}_{1}(t+1) - \bm{\theta}_{2}(t+1)\|_{2} + \delta \leq [1 + \eta_0\sqrt{m}(4 L_1^2 + 2L_2)] ( \|\bm{\theta}_{1} (t) - \bm{\theta}_{2} (t)\|_{2} + \delta ),
    \end{align*}
    by \eqref{recursion3} and \eqref{gradestimate}. 
    Then, we have the following estimates
    \begin{align}\label{paraestimate}
        \|\bm{\theta}_{1} (t) - \bm{\theta}_{2} (t)\|_{2}  
        \leq [(1 + \eta_0\sqrt{m}(4 L_1^2 + 2L_2))^t-1]\delta,
        t\in \{0,1,\dots,T_0\}. 
    \end{align}
    Hence, we have \eqref{paracontrol} when
    $
        \delta < \frac{\varepsilon}{(1 + \eta_0\sqrt{m}(4 L_1^2 + 2L_2))^{T_0}-1}. 
    $
\end{proof}

\begin{proof}[Proof of Corollary \autoref{loss}]
    For any fixed epoch $t\in \{0,1,\dots,T_0\}$ and any fixed sample $(\bm{x}_n,y_n) \in \mathcal{D}_1$, the loss function at a single sample point $L$ is continuous at $(\bm{x}_n,\bm{\theta}_1(t))$. Therefore, there exists $\iota _{n,t_0}(\varepsilon)>0$ such that 
    \begin{align*}
        |L(\bm{x}',\bm{\theta}') - L(\bm{x}_n,\bm{\theta}_1(t))| < \varepsilon, 
    \end{align*}
    when $(\bm{x}',\bm{\theta}')\in B((\bm{x}_n,\bm{\theta}_1(t)),\iota_{n,t}(\varepsilon))$. 
    
    Using the estimates \eqref{paraestimate} we proved in Theorem \autoref{para}, we deduce that when 
    \begin{align}\label{condition}
        \delta < \min \left\{\frac{\iota _{n,t}(\varepsilon)}{[1 + \eta_0\sqrt{m}(4 L_1^2 + 2L_2)]^{t}}:
        n = 1,2,\dots,N, t = 0,1,\dots,T_0\right\},
    \end{align}
    we have
    \begin{align*}
        \|(\bm{x}_n+\bm{h}_n,\bm{\theta}_2(t)) - (\bm{x}_n,\bm{\theta}_1(t))\| < \iota _{n,t}(\varepsilon), 
        t\in \{0,1,\dots,T_0\},\ n \in \{1,2,\dots,N\}, 
    \end{align*}
    and consequently we have
    \begin{align*}
        |L(\bm{x}_n+\bm{h}_n,\bm{\theta}_2(t)) - L(\bm{x}_n,\bm{\theta}_1(t))| < \varepsilon, \nonumber\\
        t\in \{0,1,\dots,T_0\},\ n \in \{1,2,\dots,N\}. 
    \end{align*}
    Therefore, \eqref{losscontrol} holds when \eqref{condition} holds. 
\end{proof}


\begin{figure*}[h!]
\centering
\includegraphics[width=\linewidth]{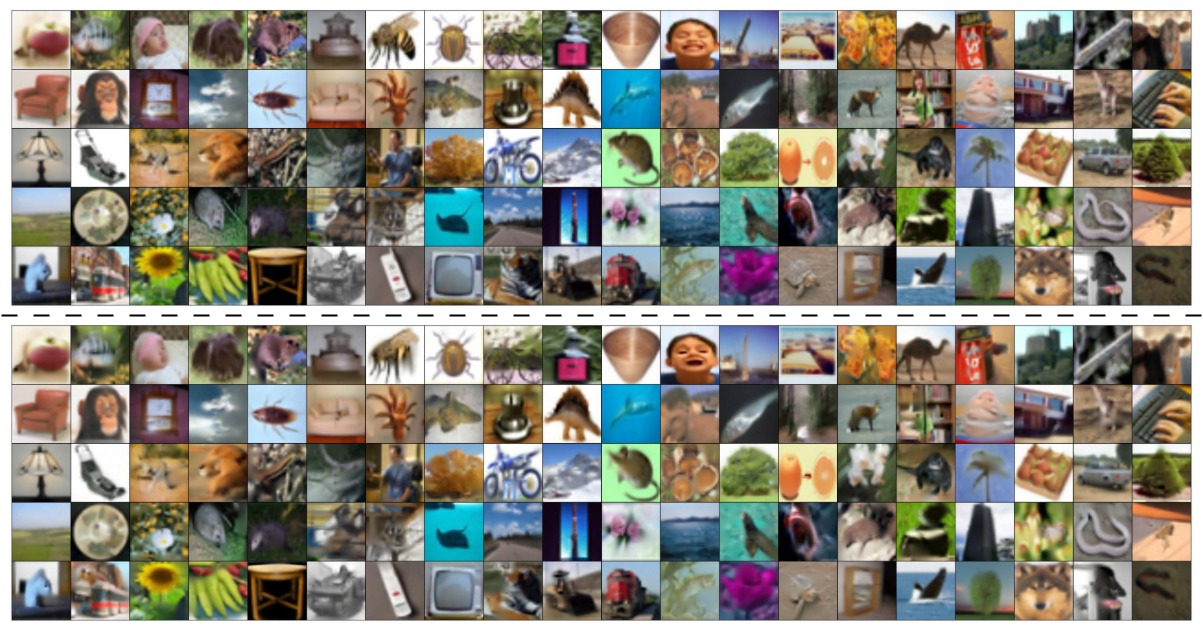}
\caption{\textbf{MIV Qualitative Results on CIFAR-100.} Real samples are in the upper part, and their corresponding recovered samples are in the lower part.}
\label{fig:miv_res_cifar100}
\end{figure*}

\begin{figure*}[h!]
\centering
\includegraphics[width=\linewidth]{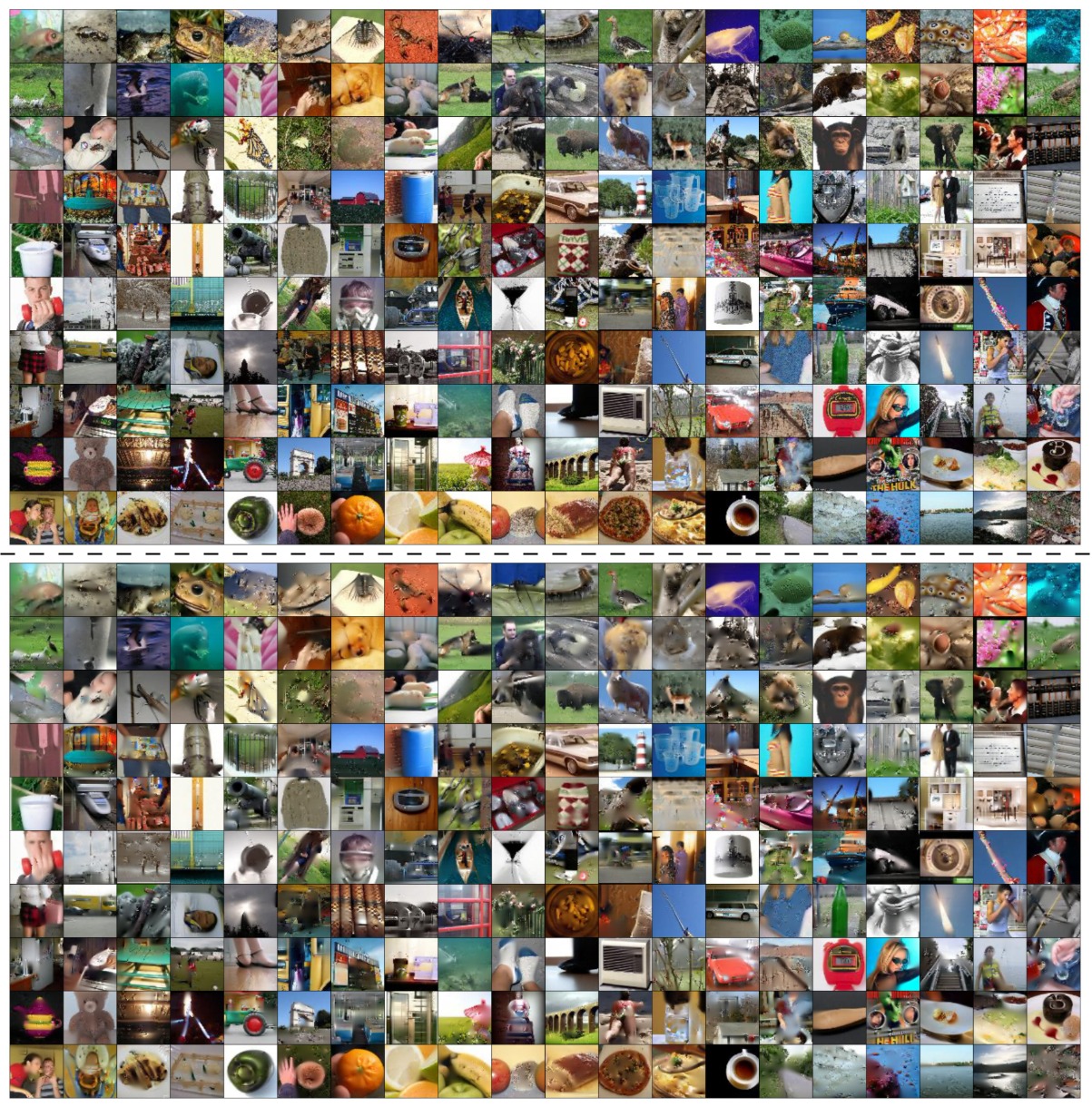}
\caption{The qualitative results of the MIV attack on TinyImagenet-200.}
\label{fig:miv_res_tinyimagenet}
\end{figure*}

\begin{table*}[h!]
\caption{Evaluation of the IRA on the SOTA Dataset Distillation Algorithms on the CIFAR-100 Dataset.}
\centering
\footnotesize
\renewcommand{\arraystretch}{1.2}
\setlength{\tabcolsep}{1.3pt}
\begin{tabular}{cc|cccc|cccc|cccc|cccc}
    \toprule
    \multirow{3}{*}{Method} & \multirow{3}{*}{Model Arch.} & \multicolumn{4}{c|}{Test Top-1 Acc \%} & \multicolumn{4}{c|}{MIA BA$\uparrow$/AUC$\uparrow$/T@LF \%$\uparrow$} & \multicolumn{4}{c|}{AIA Top-1 Acc. \%$\uparrow$} & \multicolumn{4}{c}{MIV Atk. Acc.$\uparrow$/KNN Dist$\downarrow$}\\
    & & \multicolumn{4}{c|}{IPC} & \multicolumn{4}{c|}{IPC} & \multicolumn{4}{c|}{IPC} & \multicolumn{4}{c}{IPC}\\
    & & 10 & 25 & 50 & 100 & 10 & 25 & 50 & 100 & 10 & 25 & 50 & 100 & 10 & 25 & 50 & 100\\

    \midrule

    \multirow{4}{*}{MTT} & ConvNet & 21.5 & 28.2 & 34.8 & 40.5 & 0.60/0.64/0.6 & 0.65/0.71/1.8 & 0.73/0.81/9.5 & 0.79/0.86/18.9 & \multirow{20}{*}{81.5} & \multirow{20}{*}{83.2} & \multirow{20}{*}{82.7} & \multirow{20}{*}{84.1} & 0.54/1048.2 & 0.56/1032.7 & 0.55/1082.5 & 0.59/1023.4\\
    & AlexNet & 20.7 & 26.3 & 35.6 & 38.7 & 0.63/0.68/0.9 & 0.68/0.74/1.9 & 0.77/0.85/10.4 & 0.77/0.85/16.0 &  &  &  &  & 0.55/1103.7 & 0.55/1079.2 & 0.56/1043.9 & 0.58/1037.5\\
    & ResNet18 & 22.4 & 30.1 & 33.5 & 41.9 & 0.62/0.67/1.1 & 0.65/0.71/1.3 & 0.73/0.80/7.4 & 0.81/0.88/24.8 &  &  &  &  & 0.53/1072.4 & 0.54/1034.0 & 0.56/1052.1 & 0.56/1089.2\\
    & VGG11-BN & 22.5 & 29.2 & 36.2 & 40.7 & 0.63/0.69/1.4 & 0.66/0.72/1.9 & 0.76/0.80/14.1 & 0.79/0.87/20.9 &  &  &  &  & 0.52/1132.1 & 0.54/1058.6 & 0.55/1075.2 & 0.58/1035.2\\

    \cline{1-10}
    \cline{15-18}
    \multirow{4}{*}{FTD} & ConvNet & 22.4 & 26.5 & 37.6 & 42.8 & 0.63/0.68/0.5 & 0.68/0.75/3.6 & 0.74/0.82/9.6 & 0.77/0.85/16.2 &  &  &  &  & 0.55/1088.9 & 0.56/1112.4 & 0.60/1073.3 & 0.61/1067.3\\
    & AlexNet & 22.1 & 25.7 & 31.8 & 38.5 & 0.62/0.67/0.7 & 0.63/0.68/2.7 & 0.73/0.81/6.3 & 0.75/0.83/15.1 &  &  &  &  & 0.56/1097.3 & 0.56/1047.3 & 0.59/1056.9 & 0.60/1018.3\\
    & ResNet18 & 19.3 & 25.9 & 36.6 & 40.5 & 0.59/0.63/0.3 & 0.69/0.74/4.0 & 0.72/0.80/10.5 & 0.77/0.84/14.5 &  &  &  &  & 0.55/1107.2 & 0.56/1092.1 & 0.59/1067.4 & 0.60/1043.5\\
    & VGG11-BN & 21.4 & 24.3 & 35.7 & 41.6 & 0.61/0.65/0.4 & 0.67/0.73/2.8 & 0.77/0.86/15.1 & 0.76/0.85/15.2 &  &  &  &  & 0.56/1068.4 & 0.56/1062.9 & 0.58/1036.9 & 0.61/1028.9\\

    \cline{1-10}
    \cline{15-18}
    \multirow{4}{*}{DATM} & ConvNet & 26.4 & 32.6 & 44.6 & 51.3 & 0.66/0.72/2.6 & 0.74/0.82/10.2 & 0.76/0.84/12.2 & 0.82/0.90/30.1 &  &  &  &  & 0.53/1089.2 & 0.57/1056.8 & 0.60/992.5 & 0.61/987.3\\
    & AlexNet & 24.5 & 33.7 & 42.8 & 50.5 & 0.68/0.75/3.1 & 0.75/0.83/9.1 & 0.75/0.83/8.7 & 0.80/0.88/16.1 &  &  &  &  & 0.55/1058.2 & 0.56/1043.8 & 0.59/1014.7 & 0.62/954.1\\
    & ResNet18 & 28.3 & 35.1 & 44.3 & 54.2 & 0.70/0.77/4.2 & 0.76/0.85/11.7 & 0.75/0.83/14.1 & 0.81/0.89/25.9 &  &  &  &  & 0.54/1077.5 & 0.57/1048.2 & 0.59/1021.1 & 0.63/930.4\\
    & VGG11-BN & 28.1 & 33.2 & 45.0 & 52.5 & 0.69/0.76/1.4 & 0.75/0.83/8.6 & 0.77/0.85/16.1 & 0.81/0.89/26.5 &  &  &  &  & 0.53/1089.0 & 0.55/1036.4 & 0.61/990.6 & 0.62/966.3\\

    \cline{1-10}
    \cline{15-18}
    \multirow{4}{*}{SelMatch} & ConvNet & 32.2 & 38.6 & 48.1 & 54.8 & 0.74/0.82/9.7 & 0.76/0.83/15.7 & 0.79/0.87/23.3 & 0.82/0.89/28.5 &  &  &  &  & 0.54/1067.4 & 0.55/1036.5 & 0.59/991.3 & 0.62/924.5\\
    & AlexNet & 31.3 & 39.9 & 46.7 & 52.3 & 0.73/0.80/8.9 & 0.77/0.85/12.2 & 0.78/0.86/17.8 & 0.81/0.89/25.4 &  &  &  &  & 0.54/1064.9 & 0.54/1084.6 & 0.58/1003.5 & 0.61/956.3\\
    & ResNet18 & 30.6 & 37.5 & 49.3 & 55.4 & 0.72/0.79/4.3 & 0.78/0.86/17.4 & 0.80/0.88/21.7 & 0.83/0.91/27.5 &  &  &  &  & 0.53/1078.4 & 0.57/1028.5 & 0.60/988.6 & 0.62/927.3\\
    & VGG11-BN & 30.5 & 38.4 & 49.7 & 53.3 & 0.74/0.82/7.1 & 0.79/0.86/17.3 & 0.79/0.87/17.1 & 0.82/0.89/18.1 &  &  &  &  & 0.54/1089.4 & 0.55/1042.6 & 0.58/1014.2 & 0.60/936.0\\

    \cline{1-10}
    \cline{15-18}
    \multirow{4}{*}{SeqMatch} & ConvNet & 27.2 & 32.8 & 39.2 & 43.0 & 0.70/0.78/5.2 & 0.73/0.81/8.8 & 0.77/0.85/16.1 & 0.77/0.86/17.1 &  &  &  &  & 0.54/1089.2 & 0.54/1077.5 & 0.57/1037.6 & 0.58/1024.5\\
    & AlexNet & 27.4 & 31.6 & 37.2 & 42.1 & 0.71/0.77/2.9 & 0.72/0.80/8.9 & 0.75/0.83/12.9 & 0.77/0.85/16.1 &  &  &  &  & 0.52/1143.1 & 0.54/1086.3 & 0.55/1078.3 & 0.57/1032.5\\
    & ResNet18 & 28.5 & 35.2 & 38.1 & 45.9 & 0.69/0.76/2.6 & 0.74/0.81/7.5 & 0.74/0.81/12.7 & 0.78/0.86/20.9 &  &  &  &  & 0.52/1092.3 & 0.55/1053.1 & 0.56/1044.8 & 0.58/1035.4\\
    & VGG11-BN & 29.1 & 33.9 & 40.4 & 44.5 & 0.71/0.77/3.7 & 0.74/0.82/7.2 & 0.77/0.85/13.6 & 0.78/0.86/17.4 &  &  &  &  & 0.52/1102.3 & 0.53/1088.4 & 0.56/1033.3 & 0.58/1051.2\\
    \bottomrule
\end{tabular}\label{tab:eval_cifar100}
\end{table*}

\begin{table*}[h!]
\caption{Evaluation of the IRA on the SOTA Dataset Distillation Algorithms on the TinyImageNet-200 Dataset.}
\centering
\footnotesize
\renewcommand{\arraystretch}{1.2}
\setlength{\tabcolsep}{2pt}
\begin{tabular}{cc|ccc|ccc|ccc|ccc}
    \toprule
    \multirow{4}{*}{Method} & \multirow{4}{*}{Model Arch.} & \multicolumn{3}{c|}{Test Top-1 Acc \%} & \multicolumn{3}{c|}{MIA BA$\uparrow$/AUC$\uparrow$/T@LF \%$\uparrow$} & \multicolumn{3}{c|}{AIA Top-1 Acc \%$\uparrow$} & \multicolumn{3}{c}{MIV Atk. Acc.$\uparrow$/KNN Dist.$\downarrow$}\\
    & & \multicolumn{3}{c|}{IPC} & \multicolumn{3}{c|}{IPC} & \multicolumn{3}{c|}{IPC} & \multicolumn{3}{c}{IPC}\\
    & & 10 & 25 & 50 & 10 & 25 & 50 & 10 & 25 & 50 & 10 & 25 & 50\\

    \midrule

    \multirow{4}{*}{MTT} & ConvNet & 15.3 & 19.7 & 21.5 & 0.65/0.70/0.7 & 0.68/0.75/1.8 & 0.71/0.78/3.3 & \multirow{20}{*}{  75.2  } & \multirow{20}{*}{  73.6  } & \multirow{20}{*}{ 
 77.0  } & 0.32/1359.2 & 0.34/1344.2 & 0.36/1306.8 \\
    & AlexNet & 15.1 & 17.8 & 19.9 & 0.63/0.68/1.3 & 0.66/0.71/1.6 & 0.70/0.77/2.8 &  &  &  & 0.33/1382.1 & 0.35/1363.7 & 0.35/1327.8 \\
    & ResNet18 & 14.7 & 19.4 & 22.1 & 0.62/0.69/0.9 & 0.68/0.74/2.2 & 0.71/0.78/3.5 &  &  &  & 0.32/1402.8 & 0.34/1352.8 & 0.35/1323.5 \\
    & VGG11-BN & 15.4 & 18.6 & 21.7 & 0.61/0.66/1.0 & 0.67/0.74/2.8 & 0.69/0.75/3.7 &  &  &  & 0.32/1376.9 & 0.34/1368.9 & 0.36/1324.9 \\

    \cline{1-8}
    \cline{12-14}
    \multirow{4}{*}{FTD} & ConvNet & 18.9 & 23.4 & 26.9 & 0.65/0.71/3.6 & 0.72/0.80/8.3 & 0.74/0.82/7.3 &  &  &  & 0.33/1374.6 & 0.34/1324.7 & 0.36/1309.4\\
    & AlexNet & 18.1 & 21.8 & 23.5 & 0.69/0.76/2.5 & 0.71/0.77/5.6 & 0.72/0.79/5.7 &  &  &  & 0.35/1342.8 & 0.35/1347.3 & 0.36/1318.4\\
    & ResNet18 & 19.4 & 24.3 & 27.7 & 0.70/0.77/4.5 & 0.74/0.82/9.1 & 0.74/0.81/9.2 &  &  &  & 0.35/1328.4 & 0.36/1320.7 & 0.36/1302.8\\
    & VGG11-BN & 18.5 & 23.7 & 26.4 & 0.68/0.74/3.2 & 0.73/0.81/6.7 & 0.74/0.82/7.3 &  &  &  & 0.34/1342.7 & 0.35/1311.7 & 0.36/1321.4\\

    \cline{1-8}
    \cline{12-14}
    \multirow{4}{*}{DATM} & ConvNet & 23.5 & 32.6 & 38.3 & 0.72/0.79/4.1 & 0.77/0.85/11.4 & 0.79/0.87/15.9 &  &  &  & 0.35/1325.6 & 0.36/1287.4 & 0.37/1253.8\\
    & AlexNet & 24.8 & 31.0 & 36.7 & 0.73/0.80/7.0 & 0.77/0.85/11.1 & 0.78/0.86/13.2 &  &  &  & 0.34/1347.8 & 0.36/1317.2 & 0.36/1299.4\\
    & ResNet18 & 23.0 & 33.6 & 38.5 & 0.72/0.79/5.2 & 0.78/0.85/13.2 & 0.78/0.86/14.8 &  &  &  & 0.35/1331.5 & 0.36/1307.8 & 0.36/1297.0\\
    & VGG11-BN & 24.2 & 31.8 & 37.2 & 0.71/0.78/3.8 & 0.76/0.84/9.3 & 0.79/0.87/12.8 &  &  &  & 0.35/1326.9 & 0.35/1343.9 & 0.37/1285.4\\

    \cline{1-8}
    \cline{12-14}
    \multirow{4}{*}{SelMatch} & ConvNet & 25.9 & 34.2 & 40.5 & 0.75/0.83/10.5 & 0.77/0.85/12.8 & 0.79/0.87/18.3 &  &  &  & 0.35/1337.1 & 0.35/1346.2 & 0.37/1264.8\\
    & AlexNet & 25.7 & 33.7 & 38.9 & 0.74/0.82/6.7 & 0.76/0.85/13.1 & 0.78/0.86/17.0 &  &  &  & 0.34/1351.2 & 0.35/1322.2 & 0.35/1298.4\\
    & ResNet18 & 26.3 & 34.9 & 41.8 & 0.75/0.83/9.1 & 0.78/0.86/12.3 & 0.79/0.87/16.1 &  &  &  & 0.33/1368.5 & 0.35/1317.3 & 0.36/1289.5\\
    & VGG11-BN & 24.2 & 35.4 & 39.9 & 0.74/0.82/9.3 & 0.77/0.85/15.4 & 0.79/0.87/19.7 &  &  &  & 0.34/1363.2 & 0.36/1306.7 & 0.36/1294.8\\

    \cline{1-8}
    \cline{12-14}
    \multirow{4}{*}{SeqMatch} & ConvNet & 23.0 & 28.6 & 34.9 & 0.71/0.79/5.2 & 0.74/0.82/9.3 & 0.77/0.84/11.5 &  &  &  & 0.33/1365.9 & 0.34/1354.6 & 0.35/1327.1\\
    & AlexNet & 23.1 & 30.1 & 33.7 & 0.73/0.80/7.6 & 0.75/0.83/9.6 & 0.78/0.86/13.7 &  &  &  & 0.33/1358.3 & 0.34/1363.8 & 0.35/1348.3\\
    & ResNet18 & 21.8 & 29.7 & 33.8 & 0.71/0.77/3.8 & 0.76/0.83/7.4 & 0.77/0.86/9.7 &  &  &  & 0.34/1382.0 & 0.35/1371.8 & 0.35/1362.1\\
    & VGG11-BN & 22.5 & 30.7 & 34.3 & 0.72/0.79/7.7 & 0.76/0.84/10.6 & 0.76/0.84/10.3 &  &  &  & 0.34/1361.7 & 0.34/1350.1 & 0.36/1305.8\\
    \bottomrule
\end{tabular}\label{tab:eval_tinyimagenet}
\end{table*}

\begin{figure*}[h!]
    \centering
    \includegraphics[width=\linewidth]{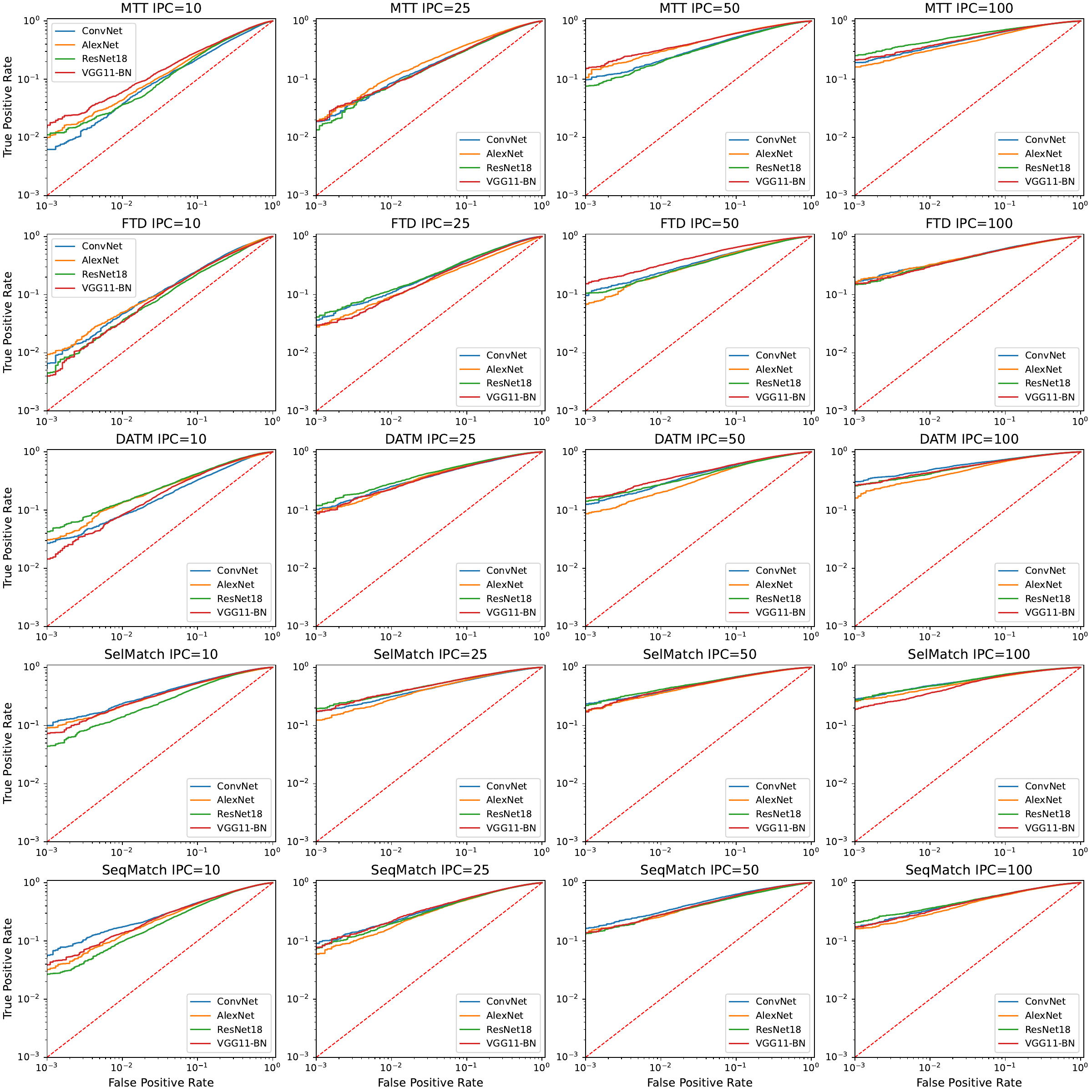}
    \caption{Log-scale ROC curves for the MIAs on the synthetic CIFAR-100 dataset. \textit{Please zoom in to see the details due to the page limit.}}
    \label{fig:atk_eval_cifar100}
\end{figure*}

\begin{figure*}[h!]
    \centering
    \includegraphics[width=.9\linewidth]{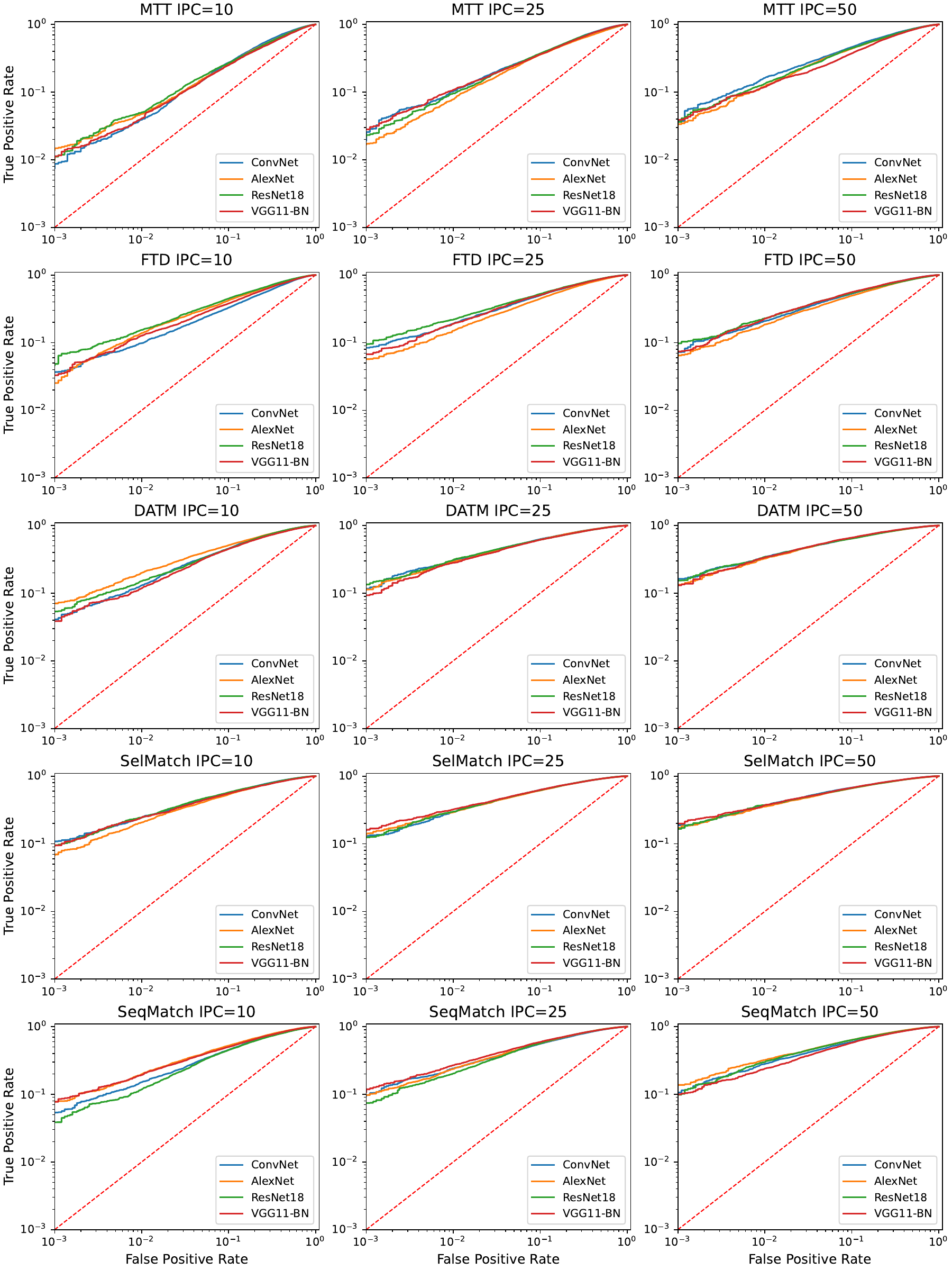}
    \caption{Log-scale ROC curves for the MIAs on the synthetic TinyImagenet-200 dataset. \textit{Please zoom in due to the page limit.}}
    \label{fig:atk_eval_tinyimagenet}
\end{figure*}


\end{document}